\definecolor{dkgreen}{rgb}{0,0.6,0}
\definecolor{gray}{rgb}{0.5,0.5,0.5}
\definecolor{mauve}{rgb}{0.58,0,0.82}
\definecolor{blue}{rgb}{0.13, 0.58, 0.99}
\definecolor{black}{rgb}{0,0,0}
\definecolor{darkgray}{rgb}{0.3, 0.3, 0.3}
\definecolor{red}{rgb}{0.8, 0.2, 0.2}
\lstdefinelanguage{JavaScript}{%
  keywords={typeof, new, true, false, catch, function, modifier, contract, return, null, catch, switch, var, if, in, while, do, else, case, break, throw},
  keywordstyle=\color{blue}\bfseries,
  ndkeywords={class, export, boolean, implements, import, this, uint32, uint256, mapping, address, public},
  ndkeywordstyle=\color{darkgray}\bfseries,
  identifierstyle=\color{black},
  sensitive=false,
  comment=[l]{//},
  morecomment=[s]{/*}{*/},
  commentstyle=\color{dkgreen}\ttfamily,
  stringstyle=\color{red}\ttfamily,
  morestring=[b]',
  morestring=[b]"
}
\tiny\color{gray},
\newtheorem{theorem}{Theorem}
\newtheorem{lemma}[theorem]{Lemma}
\newtheorem{definition}[theorem]{Definition}
\newtheorem{fact}[theorem]{Fact}
\newcommand{\vincent}[1]{}
\newcommand{\remove}[1]{}
\begin{document}
\title{
The Balance Attack Against Proof-Of-Work Blockchains: The R3 Testbed as an Example}

\author{\IEEEauthorblockN{Christopher Natoli}
\IEEEauthorblockA{\small Data61-CSIRO\\
University of Sydney\\
\small{christopher.natoli@sydney.edu.au} }
\and
\IEEEauthorblockN{Vincent Gramoli}
\IEEEauthorblockA{\small Data61-CSIRO\\
University of Sydney\\
\small{vincent.gramoli@sydney.edu.au}}}
%

\date{}

\maketitle

\thispagestyle{plain}
\pagestyle{plain}

\begin{abstract}
%

In this paper, we identify a new form of attack, called the Balance attack, against proof-of-work blockchain systems. The novelty of this attack consists of delaying network communications between multiple subgroups of nodes with balanced mining power. Our theoretical analysis captures the precise tradeoff between the network delay and the mining power of the attacker needed to double spend in Ethereum with high probability. 

We quantify our probabilistic analysis with statistics taken from the R3 consortium, and show that a single machine needs 20 minutes to attack the consortium.
Finally, we run an Ethereum private chain in a distributed system with similar settings as R3 to demonstrate the feasibility of the approach, and discuss the application of the Balance attack to Bitcoin. Our results clearly confirm that main proof-of-work blockchain protocols can be badly suited for consortium blockchains. \\

\noindent
{\bf Keywords:} Ethereum; proof-of-work, GHOST, Chernoff bounds
\end{abstract}

\section{Introduction}

Blockchain systems are distributed implementations of a chain of blocks.
Each node can issue a cryptographically signed transaction to transfer digital assets
to another node or can create a new block of transactions, by solving a crypto-puzzle,  
and append this block to its current 
view of the chain. Due to the distributed nature of this task, multiple nodes may append
distinct blocks at the same index of the chain before learning about the presence of other blocks, 
hence leading to a forked chain or a \emph{tree}.
For nodes to eventually agree on a unique state of the system, nodes apply a common strategy that 
selects a unique branch of blocks in this tree.

Bitcoin~\cite{Nak08}, one of the most popular blockchain systems, selects the longest branch. This strategy has however shown its limitation as it simply \emph{wastes 
all blocks} not 
present in this branch~\cite{DW13,PSS16,SZ15,GKK16,NKMS16}. 
If an attacker can solve crypto-puzzles fast enough to grow a local branch of the blockchain faster 
than the rest of the system, then it will eventually impose its own branch to all participants.
In particular, by delaying the propagation of blocks in the system, one can increase the amount of 
wasted blocks and proportionally slow down the growth of the longest branch of the system. 
This delay presents a serious risk to the integrity of the blockchain, 
as the attacker does not even need a large fraction of the computational power to exceed the length of 
the chain, allowing her to \emph{double spend} in new transactions the coins that she already spent in earlier transactions~\cite{Ros12}.
 
Ethereum~\cite{Woo15} proposes another selection strategy that copes with this problem. 
Each node uses an algorithm, called {\sc Ghost}, that starts from the first block, also called the \emph{genesis block}, and 
iteratively selects the root of the heaviest subtree to construct the common branch.
Even if nodes create many blocks at the same index of the blockchain, their computational power is
not wasted but counted in the selection strategy~\cite{SZ15}. 
In particular, the number of these ``sibling'' blocks 
increase the chance that their common ancestor block be selected in favor of another 
candidate block mined by the attacker.
Although it clearly alleviates the Bitcoin limitation discussed  
above~\cite{DW13,PSS16,GKK16,NKMS16} it remains unclear how long an attacker with a low mining power should delay messages to discard previous transactions in Ethereum.

In this paper, we answer this question by demonstrating theoretically and experimentally that an attacker can compensate a low mining power by delaying selected 
messages in Ethereum.
To this end, we propose a simple attack, called the \emph{Balance Attack}: an attacker transiently 
disrupts communications between subgroups of similar mining power. 
During this time, the attacker issues transactions in one subgroup, say the 
\emph{transaction subgroup}, 
and mines blocks in another subgroup, say the \emph{block subgroup}, up to the point where the tree 
of the block subgroup outweighs, with 
high probability, the tree of the transaction subgroup. 
The novelty of the Balance attack is to leverage the {\sc Ghost} protocol that accounts for sibling or 
\emph{uncle} blocks to select a chain of blocks. 
This strategy allows the attacker to mine a branch possibility in isolation of the rest of the 
network before merging its branch to one of the competing blockchain to influence the branch 
selection process.

We experimented a distributed system running Ethereum in similar settings as R3, a consortium of 
more than 70 world-wide financial institutions.
In January, R3 consisted of eleven banks and successfully collaborated in deploying an Ethereum  private chain to perform transactions.
\footnote{\url{http://www.ibtimes.co.uk/r3-connects-11-banks-distributed-ledger-using-ethereum-microsoft-azure-1539044}.}
Since then, R3 has grown and kept experimenting Ethereum\footnote{\url{http://www.coindesk.com/r3-ethereum-report-banks/}.} and other technologies while the concept of \emph{consortium private chain} gained traction for its ability to offer a blockchain system among multiple companies in a private and controlled environment. 
R3 has just released their own Corda framework.
%
As opposed to a fully private chain scenario, the consortium private chain involves 
different institutions possibly competing among each other. As they can be located 
at different places around the world, they typically use Internet to communicate.
%
%
We illustrate the Balance attack 
in the R3 testbed setting as of June 2016, by
deploying our own private chain on 15 mining virtual machines in Emulab and
configuring the network with ns-2. 
%
%

While disrupting the communication between subgroups of a blockchain system may look difficult, 
there have been some attacks successfully delaying messages of Bitcoin in the past.
In 2014, a BGP hijacker exploited access to an ISP to steal \$83000 worth of bitcoins by positioning itself between Bitcoin pools and their miners~\cite{LS14}.
Some known attacks against Bitcoin involved partitioning the communication graph at the network level~\cite{AZV16} and at the application level~\cite{HKZG15}.
%
At the network level, a study indicated the simplicity for autonomous systems to intercept a large amount of 
bitcoins and evaluated the impact of these network attacks on the Bitcoin protocol~\cite{AZV16}.
%
%
At the application level, some work showed that an attacker controlling 32 IP addresses 
can ``eclipse'' a Bitcoin node with 85\% probability~\cite{HKZG15}. 
%
More generally, man-in-middle attacks can lead to similar results by relaying the traffic between 
two nodes through the attacker.

One can exploit the Balance attack to violate the persistence of the main branch, hence
rewriting previously committed transactions, and allowing the attacker to double spend.
As opposed to previous attacks against Bitcoin where the attacker has to expand the longest chain 
faster than correct miners to obtain this result~\cite{Ros12}, the novelty of our attack lies in the contribution of the attacker to one of the correct miner chain in order to outweigh another correct miner chain of Ethereum.
We generalize our contribution to
proof-of-work algorithms by 
proposing a simple model for proof-of-work blockchains and specifying Nakamoto's and {\sc Ghost} 
consensus algorithmic differences. We also discuss how to adapt the Balance attack to 
violate the persistence of Bitcoin main branch. This adaptation requires to mine at the top of 
one of the correct chains rather than solo-mining a subchain but can lead to similar 
consequences.
More precisely, we make the four following contributions:
\begin{enumerate}
\item We show that the {\sc Ghost} consensus protocol can be vulnerable to a double spending attack without coalition and with high probability if a single attacker can delay communication between multiple communication subgraphs while owning 5\% of the total mining power of the system. \vincent{Simple distributed model, difference between bitcoin and ethereum in commit and selection.}
\item We illustrate the problem in the context of the R3 consortium blockchain, as of June 2016, where we observed 50 nodes among which 15 were mining. We show that one of the nodes can execute a Balance attack by disrupting some communication channels less than 4 minutes. 
\item We demonstrate a tradeoff between the mining power needed and the time selected communication channels have to be delayed to attack Ethereum. This suggests that combining network-level with application-level attacks increases the Ethereum vulnerability.
\item We generalize this result to Nakamoto's protocol and propose an adaptation of the Balance 
attack to double spend 
in Bitcoin if the attacker can contribute, even with a small power, by mining on top of some of the correct chains.
\end{enumerate}



\remove{
Mainstream public blockchain systems, like Bitcoin~\cite{Nak08} and Ethereum~\cite{Woo14}, require to reach consensus on Internet despite the presence of malicious participants.
Yet, it is impossible for a distributed system including a faulty process to reach consensus if messages may not be delivered within a bounded time~\cite{FLP85}.
This contradiction raises interesting research questions regarding the formal properties that are sacrificed in these blockchain systems.
Foundational consensus algorithms~\cite{DLS88} were proposed to never reach a decision in case of arbitrary message delays, but to respond only correctly if ever.
Surprisingly, 
these blockchain systems adopt a different approach, sometimes responding incorrectly.
These few last years, the concept of \emph{private chain} gained tractions for its ability to offer blockchain among multiple companies 
in a private, controlled environment. 
Three months ago, eleven banks collaborated successfully in deploying an Ethereum private chain to perform transactions across North America, Europe and Asia.\footnote{The consortium of banks includes Barclays, BMO Financial Group, Credit Suisse, Commonwealth Bank of Australia, HSBC, Natixis, Royal Bank of Scotland, TD Bank, UBS, UniCredit and Wells Fargo as explained at \url{http://www.ibtimes.co.uk/r3-connects-11-banks-distributed-ledger-using-ethereum-microsoft-azure-1539044}.}
To understand the limitations of consensus and its potential consequences in the context of private chains,  
we deployed our own private chain and stress-tested the systems in corner-case situations.

In this paper, we present the  \emph{Balance attack},  a new problem named after the Paxos anomaly~\cite{HKJ+10,BMvR10,BMvR12}, that 
prevents Bob from executing a transaction based on the current state of the blockchain.
In particular, we identified a complex scenario where the agreement on the state of the blockchain
is not sufficient to guarantee persistence of the main chain.
This anomaly can lead to dramatic consequences, like the loss of virtual assets or a double-spending attack.
We also show that some \emph{smart contracts}, expressive code snippets that help defining how virtual assets can be owned and exchanged in the system, 
may suffer from the Blockchain anomaly.
Our results outline the risk of using a blockchain in a private context without understanding its complex design features.
We terminate our experience report by providing the source code of a more complex smart contract that can circumvent a particular 
example of the Blockchain anomaly.

Most blockchain systems track a transaction by including it in a block that gets mined 
before being appended 
to the chain of existing blocks, hence called \emph{blockchain}.
The consensus algorithm guarantees a total order on these blocks, so that the chain does not end up being a tree.
This process is actually executed speculatively in that multiple new blocks can be appended transiently to the last block of the chain---a transient branching process known as a \emph{fork}.
Once the fork is discovered, meaning that the participants learn about the two branches, the 
longest branch is adopted as the valid one.
Blockchain systems usually assume that forks can grow up to some limited depth, as extending a branch requires to solve a complex challenge that boils down to spending 
a long time during which one gets likely notified of the longest chain.
Bitcoin recommends six blocks to be mined after a transaction is issued to consider the transaction accepted by the system.
Similarly, Ethereum states that five to eleven more blocks should be appended after a block for it to be accepted~\cite{Woo14}.

However, consensus cannot be solved in the general case. In particular, 
foundational results of distributed computing indicate that consensus cannot be reached if there is no upper-bound on the time for a message to be delivered and if some participant may fail~\cite{FLP85}.
Consensus is usually expressed in three properties: \emph{agreement} indicating that if two non-faulty participants decide they decide on the same block, \emph{validity} indicating that the decided block should be one of the blocks that 
were proposed and \emph{termination} indicating that eventually a correct participant decides.
The common decision that is taken by famous consensus protocols, like Paxos~\cite{Lam98} and Raft~\cite{OO14},
is to make sure that if the messages get delayed, at least validity and agreement remain ensured by having the algorithm doing nothing, hence sacrificing termination to ensure that only correct responses---satisfying both validity and agreement---can be returned.
These consensus algorithms are appealing, because if after some time the network stabilises and messages get delivered in a bounded time, then consensus is reached~\cite{DLS88}.

We illustrate the Blockchain anomaly and describe a distributed execution where even committed transactions of a private chain get reordered so that the latest transaction ends up being committed first. We chose Ethereum for our experiments as it is a mainstream blockchain system that allows the deployment of private chains.
We show how to reproduce the Blockchain anomaly by following the same execution, where messages get 
delayed between machines while some miner mines new blocks. Despite transactions being already committed the eventual delivery of messages produces a reorganisation reordering some of the committed transactions. 
In our execution, miners are setup to dedicate different number of cores to the mining process, hence mining at different speeds. We argue that the misconfiguration of a machine and the heterogeneous mining capabilities of machines belonging to different companies are sufficiently realistic to allow an attacker to execute a double-spending attack. 
Finally, we discuss the relations of the Blockchain anomaly to other problems and 
observe that it is not confined to 
the Ethereum blockchain but could potentially apply to proof-of-stake private blockchains as well, 
requiring further investigations.

} 


Section~\ref{sec:prel} defines the problem.
In Section~\ref{sec:pa}, we present the algorithm to run the attack.
In Section~\ref{sec:proof}, we show how the analysis affects {\sc Ghost}.
In Section~\ref{sec:example}, we simulate a Balance attack in the context of the R3 consortium network.
In Section~\ref{sec:expe}, we present our experiments run in an Ethereum private chain.
In Section~\ref{sec:disc}, we discuss the implications of the attack in existing blockchain systems.
Section~\ref{sec:rw} presents the related work.
And Section~\ref{sec:conclusion} concludes.
Appendix~\ref{app:proof} includes the missing proofs for the general case.


\section{Preliminaries}\label{sec:prel}

\vincent{main chain or branch?}
In this section we model a simple distributed system as a communication graph that 
implements a blockchain abstraction as a directed acyclic graph. 
We propose a high-level pseudocode representation of proof-of-work blockchain protocols in 
this model that allows us to illustrate an important difference between Bitcoin and 
Ethereum in the selection of a main branch with a persistent prefix.

\subsection{A simple distributed model for blockchains}

We consider a communication graph $G = \tup{V,E}$ with nodes $V$
connected to each other through fixed communication links $E$. 
Nodes are part of a blockchain system $S \in \{\lit{bitcoin}, \lit{ethereum}\}$ and can act as clients by issuing transactions to the system and/or servers by \emph{mining}, the 
action of trying to combine transactions into a block.
For the sake of simplicity, we consider that each node possesses a single account and that a 
\emph{transaction} issued by node $p_i$ is a transfer of digital assets or \emph{coins} 
from the account of the source node $p_i$ to the account of 
a destination node $p_j \neq p_i$.
Each transaction is uniquely identified and broadcast to all nodes in a best-effort manner.
We assume that a node re-issuing the same 
transfer multiple times creates as many distinct transactions.

Nodes that mine are called \emph{miners}. We refer to the computational power of a miner as its \emph{mining power} and we denote the total mining power $t$ as the sum of the mining powers of all miners in $V$.
Each miner tries to group a set $T$ of transactions it heard about into a block $b \supseteq T$ as long as transactions of $T$ do not conflict and that the account balances remain non-negative.
For the sake of simplicity in the presentation, the graph $G$ is static meaning that no nodes can join and leave the system, however, nodes may fail as described in Section~\ref{ssec:faults}.
\vincent{We consider that the network is asynchronous in that there is no bound on the delay it takes for the 
message to be delivered.}

\subsubsection{Miner must solve a crypto-puzzle to create a new block}
Miners provably solve a hashcash crypto-puzzle~\cite{Bla02} before creating a new block. 
Given a global threshold and the block of largest index the miner knows, the miner repeatedly selects a nonce and applies a pseudo-random function to this block and the selected nonce until it obtains a result lower than the threshold.
Upon success the miner creates a block that contains the successful nounce as a proof-of-work as 
well as the hash of the previous block, hence fixing the index of the block, and broadcasts the block.
As there is no known strategy to solve the crypto-puzzle, the miners simply keep testing whether randomly chosen numbers solve the crypto-puzzle. 
The mining power is thus expressed in the number of hashes the miner can test per second, or $H/s$ for short.
The \emph{difficulty} of this crypto-puzzle, defined by the threshold, limits the rate at which new blocks can be generated by the network.
In the remainder, we refer to $d$ as the difficulty of this crypto-puzzle.
%

\subsubsection{The failure model}\label{ssec:faults}
We assume the presence of an \emph{adversary} (or attacker)
that can control nodes that together own a relatively small fraction $\rho < 0.5$ of the total
mining power of the system.
The nodes controlled by the adversary are called \emph{malicious} and may not follow the protocol specification, however, they cannot impersonate other nodes while issuing transactions.\footnote{This is typically ensured through public key crypto-systems.}
A node that is not malicious is \emph{correct}.
We also assume that the adversary can transiently disrupt communications on a selected subset of edges $E_0$ of the communication graph $G$.
\vincent{TODO: Replace attacker by adversary everywhere.}


\begin{figure}[t]
\begin{center}
\subfigure[view $\ell_1$\label{sfig:l1}]{\includegraphics[clip=true, viewport=150 0 280 200, scale=0.35]{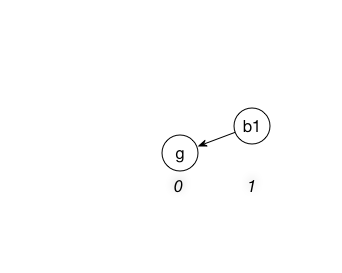}}
\subfigure[view $\ell_2$\label{sfig:l2}]{\includegraphics[clip=true, viewport=130 40 260 200, scale=0.35]{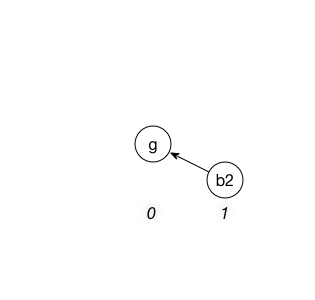}}
\subfigure[view $\ell_3$\label{sfig:l3}]{\includegraphics[clip=true, viewport=130 0 330 200, scale=0.35]{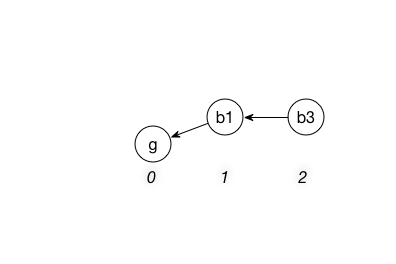}}
\subfigure[global state $\ell_0=\ell_1 \cup \ell_2 \cup \ell_3$\label{sfig:union}]{\includegraphics[clip=true, viewport=130 40 330 205, scale=0.35]{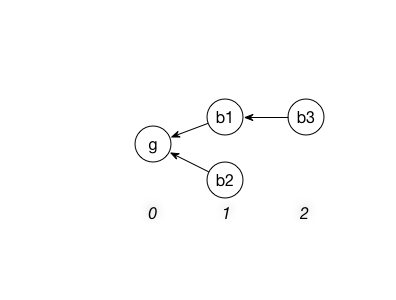}}
\caption{The global state $\ell_0$ of a blockchain results from the union of the distributed local views $\ell_1$, $\ell_2$ and $\ell_3$ of the blockchain\label{fig:chains}}
\end{center}
\end{figure}

\subsubsection{The blockchain abstraction}
Let the \emph{blockchain} be a directed acyclic graph (DAG) $\ell = \tup{B, P}$ such that 
blocks of $B$ point to each other with pointers $P$ (pointers are recorded in a block as a hash of the previous block) and a special block $g \in B$, called the \emph{genesis block}, does not point to any  block. 
%

\begin{algorithm}[!ht]
  \caption{Blockchain construction at node $p_i$}\label{alg:pow}
  \begin{algorithmic}[1]
    {\footnotesize
	  \State $\ell_i = \tup{B_i, P_i}$, the local blockchain at node $p_i$ is a directed acyclic 
	  \State \T graph of blocks $B_i$ and pointers $P_i$
	 
	\Statex
	
	\Part{$\act{receive-blocks}(\tup{B_j, P_j})_i$}{ \label{line:pow-receive-block-starts} \Comment{upon reception of blocks}
	        \State $B_i \gets B_i \cup B_j$ \Comment{update vertices of blockchain}
		\State $P_i \gets P_i \cup P_j$ \Comment{update edges of blockchain}\label{line:pow-receive-end}
	}\EndPart 
	
%
	
    }
    \algstore{pow}
    \algstore{pow2}
  \end{algorithmic}
\end{algorithm}

Algorithm~\ref{alg:pow} describes the progressive construction of the blockchain at a 
particular node $p_i$ upon reception of blocks from other nodes by simply aggregating 
the newly received blocks to the known blocks  (lines~\ref{line:pow-receive-block-starts}--\ref{line:pow-receive-end}).
As every added block contains a hash to a previous block that eventually leads back to the genesis block, each block is associated with a fixed index. By convention we consider the genesis block at index $0$, and the blocks at $k$ hops away from the genesis block as the blocks at index $k$. 
As an example, consider the simple blockchain $\ell_1 = \tup{B_1, P_1}$ depicted in Figure~\ref{sfig:l1} where $B_1 = \{g, b_1\}$ and $P_1 = \{\tup{b_1, g}\}$. The genesis block $g$ 
has index 0 and the block $b_1$ has index 1. 


\subsubsection{Forks as disagreements on the blocks at a given index}

As depicted by views $\ell_1$, $\ell_2$ and $\ell_3$ in Figures~\ref{sfig:l1},~\ref{sfig:l2} and~\ref{sfig:l3}, respectively, nodes may have a different views 
of the current state of the blockchain.
In particular, it is possible for two miners $p_1$ and $p_2$ to mine almost simultaneously two different blocks, say $b_1$ and $b_2$.
If neither block $b_1$ nor $b_2$ was propagated early enough to nodes $p_2$ and $p_1$, respectively, then 
both blocks would point to the same previous block $g$ as depicted in Figures~\ref{sfig:l1} 
and~\ref{sfig:l2}.
Because network delays are not predictable, a third node $p_3$ may receive the block $b_1$ 
and mine a new block without hearing about $b_2$. The three nodes $p_1$, $p_2$ and $p_3$ thus end up having 
three different local views of the same blockchain, denoted $\ell_1 = \tup{B_1, P_1}$, $\ell_2= \tup{B_2, P_2}$ and $\ell_3= \tup{B_3, P_3}$. 

We refer to the \emph{global blockchain} as 
the directed acyclic graph $\ell_0 = \tup{B_0, P_0}$ representing the union of these local blockchain views, denoted by $\ell_1 \cup \ell_2 \cup \ell_3$ for short, as depicted in Figure~\ref{fig:chains}, and more formally defined as follows:
\begin{equation}
\left\{\begin{array}{ll}
B_0 &= \cup_{\forall i} B_i,\notag \\
P_0 &= \cup_{\forall i} P_i.
\end{array}\right.
\end{equation}
The point where distinct blocks of the global blockchain DAG have the same predecessor block
is called a \emph{fork}. 
As an example Figure~\ref{sfig:union} depicts 
a fork
with two branches pointing to the same block: $g$ in this example. 

In the remainder of this paper, we refer to the DAG as a \emph{tree} rooted in $g$ 
with upward pointers, where children blocks point to their parent block.

\subsubsection{Main branch in Bitcoin and Ethereum}\label{ssec:consensus}
To resolve the forks and define a deterministic state agreed upon by all nodes, a blockchain system must select a \emph{main branch}, as a unique sequence of blocks, based on the tree. 
Building upon the generic construction (Alg.~\ref{alg:pow}),
we present 
two 
selections: Nakamoto's consensus protocol (Alg.~\ref{alg:nakamoto}) present in Bitcoin~\cite{Nak08} and the 
{\sc Ghost} consensus protocol (Alg.~\ref{alg:ghost}) present in Ethereum~\cite{Woo15}. 


%

\paragraph{Nakamoto's consensus algorithm}
The difficulty of the crypto-puzzles used in Bitcoin produces a block every 10 minutes in expectation.
The advantage of this long period, is that it is relatively rare for the blockchain to fork because blocks are rarely mined during the time others are propagated to the rest of the nodes. 

\begin{algorithm}[!ht]
  \caption{Nakamoto's consensus protocol at node $p_i$}\label{alg:nakamoto}
  \begin{algorithmic}[1]
  \algrestore{pow}
    {\footnotesize

%
%
%

	\State $m = 5$, the number of blocks to be appended after the block containing\label{line:m-btc}
	\State \T $\ms{tx}$, for $\ms{tx}$ to be committed in Bitcoin 
	\Statex
	
	\Part{$\act{get-main-branch}()_i$}{    	\Comment{select the longest branch} \label{line:btc-pruning-start}
		\State $b \gets \lit{genesis-block}(B_i)$ \Comment{start from the blockchain root}
		\While{$b.\ms{children} \neq \emptyset$} \Comment{prune shortest branches}
			\State $\ms{block} \gets \lit{argmax}_{c\in b.\ms{children}}\{\lit{depth}(c)\}$  \Comment{root of deepest subtree}\label{line:deepest}
			\State $B \gets B \cup \{\ms{block}\}$ \Comment{update vertices of main branch}
			\State $P \gets P \cup \{\tup{\ms{block},\ms{b}}\}$ \Comment{update edges of main branch}
			\State $b \gets \ms{block}$ \Comment{move to next block}
		\EndWhile
		\State {\bf return} $\tup{B,P}$ \Comment{returning the Bitcoin main branch} \label{line:btc-pruning-end}
	}\EndPart
	
	\Statex
	
	\Part{$\act{depth}(b)_i$}{ 				\Comment{depth of tree rooted in $b$}
		\If{$b.\ms{children} = \emptyset$} {\bf return} 1 \Comment{stop at leaves}
		\Else{}
			 {\bf return} 1 + $\max_{c \in b.\ms{children}} \lit{depth}(c)$ \Comment{recurse at children}
		\EndIf	\label{line:btc-depth-end}
	}\EndPart
	
%
	
    }
  \end{algorithmic}
\end{algorithm}

Algorithm~\ref{alg:nakamoto} depicts the Bitcoin-specific pseudocode that includes Nakamoto's consensus protocol to decide on a particular block at index $i$ (lines~\ref{line:btc-pruning-start}--\ref{line:btc-depth-end}) and the choice of parameter $m$ (line~\ref{line:m-btc}) explained later in Section~\ref{ssec:termination}.
When a fork occurs, Nakamoto's protocol resolves it by selecting the deepest branch as the main branch (lines~\ref{line:btc-pruning-start}--\ref{line:btc-pruning-end}) by iteratively selecting the root of the deepest subtree (line~\ref{line:deepest}).  When process $p_i$ is done with this pruning, the resulting branch becomes the main branch $\tup{B_i, P_i}$ as observed by the local process $p_i$.
Note that 
the pseudocode for checking whether a block is decided and a transaction committed based on this parameter $m$ is common to Bitcoin and Ethereum, it is thus deferred to Alg.~\ref{alg:commit}.
%

\vincent{
Bitcoin guarantees that a transaction is immutable as soon as it is committed, provided that 
blocks are propagated in a limited amount of time.
It also guarantees that if the recipient of the fund transferred in a transaction observes that the transaction is included in a block followed by $k$ block, then the transaction is committed.
}

\subsubsection{The {\sc Ghost} consensus algorithm}
As opposed to the Bitcoin protocol, Ethereum generates one block every 12--15 seconds. While it improves the throughput (transactions per second) it also favors transient forks as miners are more likely to propose new blocks without having heard about the latest mined blocks yet.
To avoid wasting large mining efforts while resolving forks, Ethereum uses the {\sc Ghost} (Greedy Heaviest Observed Subtree) consensus algorithm that accounts for the, so called \emph{uncles}, blocks of discarded branches.
In contrast with Nakamoto's protocol, the {\sc Ghost} protocol iteratively selects, as the successor block, the root of the subtree that contains the largest number of nodes (cf. Algorithm~\ref{alg:ghost}).
%
%
%
%
%

\begin{figure}[ht!]
\begin{center}
\includegraphics[scale=0.5, clip=true, viewport=120 50 600 300]{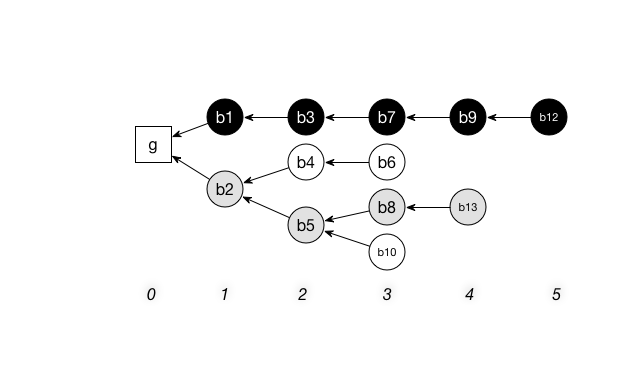}
\caption{Nakamoto's consensus protocol at the heart of Bitcoin selects the main branch as the deepest branch (in black) whereas the {\sc Ghost} consensus protocol at the heart of Ethereum follows the heaviest subtree (in grey)\label{fig:reorg}}
\end{center}
\end{figure}

The main difference between Nakamoto's consensus protocol and {\sc Ghost} is depicted in Figure~\ref{fig:reorg}, where the black blocks represent the main branch selected by Nakamoto's consensus protocol and the grey blocks represent the main branch selected by {\sc Ghost}.

\begin{algorithm}[!ht]
  \caption{The {\sc Ghost} consensus protocol at node $p_i$}\label{alg:ghost}
  \begin{algorithmic}[1]
  \algrestore{pow2}
    {\footnotesize

	\State $m = 11$, the number of blocks to be appended after the block containing\label{line:m-eth}
	\State \T $\ms{tx}$, for $\ms{tx}$ to be committed in Ethereum (since Homestead v1.3.5)
	\Statex

%
%
%
	
	\Part{$\act{get-main-branch}()_i$}{    	\Comment{select the branch with the most nodes}\label{line:eth-get-branch-start}
		\State $b \gets \lit{genesis-block}(B_i)$ \Comment{start from the blockchain root}
		\While{$b.\ms{children} \neq \emptyset$} \Comment{prune lightest branches}
			\State $\ms{block} \gets \lit{argmax}_{c\in b.\ms{children}}\{\lit{num-desc}(c)\}$ \Comment{root of heaviest tree}
			\State $B \gets B \cup \{\ms{block}\}$ \Comment{update vertices of main branch}
			\State $P \gets P \cup \{\tup{\ms{block},\ms{b}}\}$ \Comment{update edges of main branch}
			\State $b \gets \ms{block}$ \Comment{move to next block}
		\EndWhile
		\State {\bf return} $\tup{B,P}$. \Comment{returning the Ethereum main branch}\label{line:eth-get-branch-end}
	}\EndPart
	
	\Statex
	
	\Part{$\act{num-desc}(b)_i$}{  \Comment{number of nodes in tree rooted in $b$}
		\If{$b.\ms{children} = \emptyset$} {\bf return} 1 \Comment{stop at leaves}\label{line:num-desc-start}
		\Else{}
			 {\bf return} $1+\sum_{c \in b.\ms{children}} \lit{num-desc}(c)$ \Comment{recurse at children}
		\EndIf \label{line:num-desc-end}
	}\EndPart
    }
    \algstore{pow3}
  \end{algorithmic}
\end{algorithm}


\subsection{Decided blocks and committed transactions}\label{ssec:termination}

A blockchain system $S$ must define when the block at an index is agreed upon.
To this end, it has to define a point in its execution where a prefix of the main branch can be ``reasonably'' considered
as persistent.\footnote{In theory, there cannot be consensus on a block at a partiular index~\cite{FLP85}, hence preventing persistence, 
however, applications have successfully used Ethereum to transfer 
digital assets based on parameter $m_{\ms{ethereum}}=11$~\cite{NG16}.
}
More precisely, there must exist a parameter $m$ provided by $S$
for an application to consider a block as \emph{decided} and its transactions as \emph{committed}. This parameter is typically $m_{\ms{bitcoin}} = 5$ in Bitcoin (Alg.~\ref{alg:nakamoto}, line~\ref{line:m-btc}) and $m_{\ms{ethereum}} = 11$ in Ethereum (Alg.~\ref{alg:ghost}, line~\ref{line:m-eth}).

\begin{definition}[Transaction commit]
Let $\ell_i = \tup{B_i, P_i}$ be the blockchain view at node $p_i$ in system $S$.
For a transaction $\ms{tx}$ to be \emph{locally committed} at $p_i$, the conjunction of the following properties must hold in $p_i$'s view $\ell_i$: 
\begin{enumerate}
\item Transaction $\ms{tx}$ has to be in a block $b_0 \in B_i$ of the main branch of system $S$. 
Formally, $\ms{tx}\in b_0 \wedge b_0 \in B'_i : c_i = \tup{B'_i, P'_i} = \lit{get-main-branch}()_i$. 
\item There should be a subsequence of $m$ blocks $b_1, ..., b_m$ appended after block $b$. Formally, $\exists b_1, ..., b_m \in B_i : \tup{b_1, b_0}, \tup{b_{2}, b_{1}}, ..., \tup{b_m, b_{m-1}} \in P_i$. (In short, we say that $b_0$ is \emph{decided}.)
\end{enumerate}
A transaction $\ms{tx}$ is \emph{committed} if there exists a node $p_i$ such that $\ms{tx}$ is \emph{locally committed}.
\end{definition}

Property (1) is needed because nodes eventually agree on the main branch that defines the 
current state of accounts in the system---blocks that are not part of the main branch are ignored.
Property (2) is necessary to guarantee that the blocks and transactions currently in the main branch
will persist and remain in the main branch.
Before these additional blocks are created, nodes may not have reached consensus regarding the 
unique blocks $b$ at index $j$ in the chain. This is illustrated by the fork of Figure~\ref{fig:chains}
where nodes consider, respectively, the pointer 
$\tup{b_1, g}$ and the pointer
$\tup{b_2, g}$ in their local blockchain view.
By waiting for $m$ blocks were $m$ is given by the blockchain system, the system guarantees 
with a reasonably high probability that nodes will agree on the same block $b$.
%


\begin{algorithm}[!ht]
  \caption{Checking transaction commit at node $p_i$}\label{alg:commit}
  \begin{algorithmic}[1]
    \algrestore{pow3}
    {\footnotesize

	\Part{$\act{is-committed}(\ms{tx})_i$}{ \label{line:pow-commit-starts} \Comment{check whether transaction is committed}
	        \State $\tup{B'_i, P'_i} \gets \act{get-main-branch}()$ \Comment{pick main branch with Alg.~\ref{alg:nakamoto} or~\ref{alg:ghost}}\label{line:get-main-branch}
	        \If{$\exists b_0 \in B'_i :  \ms{tx} \in b_0 \,\wedge\, \exists b_1, ..., b_m \in B_i :$ \Comment{\ms{tx} in main branch}
	        		\State \T$ \tup{b_{1}, b_0}, \tup{b_2,b_1} ..., \tup{b_{m},b_{m-1}} \in P_i$}   \Comment{enough blocks}
				\State {\bf return} {$\lit{true}$} 
			\Else{} {\bf return} {$\lit{false}$}
	        \EndIf 
	        \label{line:pow-commit-ends}
	}\EndPart 
	
    }
  \end{algorithmic}
\end{algorithm}

For example, consider a fictive blockchain system with $m_{\ms{fictive}}=2$ that selects the heaviest branch (Alg.~\ref{alg:ghost}, lines~\ref{line:eth-get-branch-start}--\ref{line:eth-get-branch-end}) as its main branch. If the blockchain state was the one depicted in Figure~\ref{fig:reorg}, then blocks $b_2$ and $b_5$ would be decided and all their transactions would be committed. This is because they are both part of the main branch and they are followed by at least 2 blocks, $b_8$ and $b_{13}$. (Note that we omit the genesis 
block as it is always considered decided but does not include any transaction.)

\section{The Balance Attack}\label{sec:pa}

In this section, we present the Balance attack, a novel form of attacks that affect proof-of-work blockchains, especially Ethereum.
Its novelty lies in identifying subgroups of miners of equivalent mining power and delaying messages between them rather than entering a race to mine blocks faster than others.

The balance attack demonstrates a fundamental limitation of main proof-of-work systems in that they are block oblivious.
%
\begin{definition}[Block Obliviousness]\label{def:obliviousness}
A blockchain system is \emph{block oblivious} if an attacker can:
\begin{enumerate}
\item make the recipient of a transaction $\ms{tx}$ observe that $\ms{tx}$ is committed and 
\item later remove the transaction $\ms{tx}$ from the main branch,
\end{enumerate}
with a 
probability $1-\varepsilon$, where $\varepsilon$ is a small constant.
\end{definition}

The balance attack is simple: after the attacker \vincent{disrupts communication} introduces a delay between correct 
subgroups of equivalent mining power, it simply issues transactions in one subgroup. The 
attacker then mines sufficiently many blocks in another subgroup to ensure with high 
probability that 
the subtree of another subgroup outweighs the transaction subgroup's. 
Even though the transactions are 
committed, the attacker can rewrite with high probability the blocks that contain these transactions by 
outweighing the subtree containing this transaction. 

Note that one could benefit from delaying messages only between the merchant and the rest of the 
network by applying the eclipse attack~\cite{HKZG15} to Ethereum.  
Eclipsing one node of Bitcoin appeared, however, sufficiently difficult: 
it requires to restart the node's protocol in order to control all the logical neighbors the node will eventually try to connect to. While a Bitcoin node typically connects to 8 logical 
neighbors, an Ethereum node typically connects to 25 nodes, making the problem even
harder. Another option would be to isolate a subgroup of smaller 
mining power than another subgroup, however, it would make the attack only possible if the 
recipients of the transactions are located in the subgroup of smaller mining power. Although possible
this would limit the generality of the attack, because the attacker would be constrained on  
the transactions it can override.

Note that the Balance Attack inherently violates the persistence of the main branch prefix
and is enough for the attacker to double spend.
The attacker has simply to identify the subgroup that contains merchants and create transactions to buy goods from these merchants. After that, it can issue the transactions to this subgroup while propagating its mined blocks to at least one of the other subgroups.
Once the merchant shipped goods, 
the attacker stops delaying messages.
Based on the high probability that the tree seen by the merchant is outweighed by another subtree, 
the attacker could reissue another transaction transferring the exact same coin again.
%
%
%

\begin{figure}[t]\
\begin{center}
\subfigure[Example of the selection of edges $E_0$ delayed between $k=2$ subgraphs of 25 units of mining power each\label{sfig:partition}]{\includegraphics[scale=0.41, clip=true, viewport=20 50 300 350]{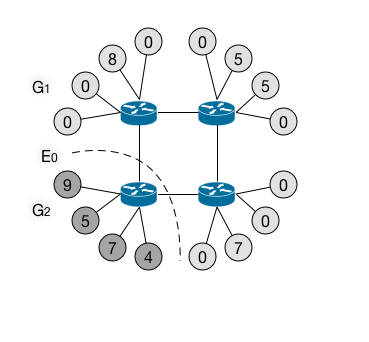}}\hspace{0.5em}
\subfigure[Example of the selection of edges $E_0$ delayed between $k=4$ subgraphs of 12 units of mining power each\label{sfig:partition2}]{\includegraphics[scale=0.41, clip=true, viewport=100 50 400 350]{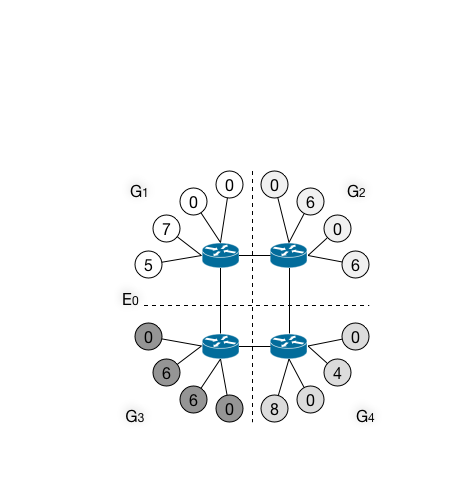}}
\caption{Two decompositions of communication graphs into subgraphs by an attacker where $E_0$ represents 
the cut of communication edges linking the subgraphs}
\end{center}
\end{figure}

\subsection{Executing a Balance Attack}
For the sake of simplicity, let us fix $k=2$ and postpone the general analysis for any $k\geq 2$ to Appendix~\ref{app:proof}. 
We consider subgraphs $G_1 = \tup{V_1, E_1}$ and $G_2 = \tup{V_2, E_2}$ of the communication graph 
$G = \tup{V,E}$ so that each subgraph has half of the mining power of the system as depicted in Figure~\ref{sfig:partition} whereas Figure~\ref{sfig:partition} illustrates the variant when $k=4$.
Let $E_0 = E \setminus (E_1 \cup E_2)$ be the set of edges that connects nodes of $V_1$ to nodes of 
$V_2$ in the original graph $G$. 
Let $\tau$ be the communication delay introduced by the attacker on the edges of $E_0$.

As indicated in Algorithm~\ref{alg:paa}, the attacker can introduce a sufficiently long delay $\tau$ during which the miners of $G_1$ mine in isolation of the miners of $G_2$ (line~\ref{line:tau}). As a consequence, different transactions get committed
in different series of blocks on the two blockchains locally viewed by the subgraphs $G_1$ and $G_2$. 
Let $b_2$ be a block present only in the blockchain viewed by $G_2$ but absent 
from the blockchain viewed by $G_1$.
In the meantime, the attacker issues transactions
spending coins $C$ in $G_1$ (line~\ref{line:spend}) and mines a blockchain starting from the block $b_2$ (line~\ref{line:mineb2}).
Before the delay expires the attacker sends his blockchain to $G_2$.
After the delay expires, 
the two local views of the blockchain are exchanged. 
Once the heaviest branch that the attacker contributed to is adopted, the attacker
can simply reuse the coins $C$ in new transactions (line~\ref{line:doublespend}).

\begin{algorithm}[t]
  \caption{The Balance Attack initiated by attacker $p_i$}\label{alg:paa}
  \begin{algorithmic}[1]
    {\footnotesize

	\Part{State}{ \label{line:bitcoin-start}
	  \State $G = \tup{V,E}$, the  communication graph
	  \State $\lit{pow}$, a mapping from of a node in $V$ to its mining power in ${\mathbb R}$
	  \State $\ell_i = \tup{B_i, P_i}$, the local blockchain at node $p_i$ is a directed acyclic 
	  \State \T\T graph of blocks $B_i$ and pointers $P_i$
	  \State $\rho \in (0;1)$, the portion of the mining power of the system owned by 
	  \State \T\T the attacker $p_i$, typically $\rho < 0.5$
	  \State $d$, the difficulty of the crypto-puzzle currently used by the system
	}\EndPart
	 
	\Statex
	
	\Part{$\act{balance-attack}(\tup{V,E})_i$}{ \label{line:attack-starts} \Comment{starts the attack}
	 \State Select $k\geq 2$ subgraphs $G_1 = \tup{V_1,E_1}, ..., G_k= \tup{V_k,E_k}$: 
	 \State \T\T$\sum_{\forall v \in V_1} \lit{pow}(v) \approx ... \approx \sum_{\forall v' \in V_k} \lit{pow}(v')$
	 \State Let $E_0 = E \setminus \cup_{\forall 0<i\leq k} E_i$ \Comment{attack communication channels}
	 \State Stop communications on $E_0$ during
	 $\tau \geq \frac{(1-\rho)6d\log(\frac{4}{\varepsilon})}{4\rho^2t}$ seconds\label{line:tau}
	 \State Issue transaction $\ms{tx}$ crediting a merchant in graph $G_i$ with coins $C$\label{line:spend}
	 \State Let $b_2$ be a block appearing in $G_j$ but not in $G_i$ \label{line:b2}
	 \State Start mining on $\ell_i$ immediately after $b_2$\label{line:mineb2}
	 \Comment{contributed to correct chain}
	 \State Send blockchain view $\ell_i$ 
	 to some subgraph $G_j$ where $j\neq i$	
	 \State When $\tau$ seconds have elapsed, stop delaying communications on $E_0$\label{line:stop}
	 \State Issue transaction $\ms{tx}'$ that double spends coins $C$\label{line:doublespend}
	}\EndPart
	
    }
  \end{algorithmic}
\end{algorithm}

\subsection{The knowledge about the network}

As indicated by the state of Algorithm~\ref{alg:paa}, an attacker has to be knowledgeable about the current 
settings of the blockchain system to execute a Balance attack. In fact, the attacker must have information regarding the 
logical or physical communication graph, the mining power of the miners or pools of miners and the current 
difficulty of the crypto-puzzle. In particular, this information is needed to delay messages for long enough 
between 
selected communication subgraphs. As we will show in the next section, this delay can be overestimated, 
however, underestimating it may make the attack impossible.

The information regarding the mining power and the difficulty of nodes is generally public information and can 
often be retrieved online. 
In particular, we got provided an access to the R3 Ethereum network statistics \url{http://r3n1-utils.gcl.r3cev.com/} that included block 
propagation delays, the list of connected nodes and their individual mining power, the version of their Ethereum
client as well as the current difficulty of the crypto-puzzle. 
The same statistical information regarding the Ethereum public chain is publicly available online at  \url{https://ethstats.net/}. 

It is more difficult, however, to gather information regarding the communication network. 
Note that some tools exist to retrieve information regarding the communication topology of blockchain systems. 
The interesting aspects of the Balance attack is that it can apply to the logical overlay used by the peer-to-peer network of the blockchain system or to the physical overlay. While there exist tools to retrieve the logical overlay 
topology, like  AddressProbe~\cite{MLP+15} to find some information regarding the peer-to-peer overlay of Bitcoin, it can be easier for an attacker of Ethereum to run a DNS poisoinning or a denial-of-service attack rather than a BGP hijacking~\cite{LS14,AZV16} that requires access to autonomous systems.


%

\remove{

\section{The Blockchain Anomaly}\label{sec:ba}
We present the Blockchain anomaly, an anomaly that affects mainstream blockchain systems whose consensus protocol does not ensure agreement deterministically.

\subsection{Causes of the Blockchain Anomaly}
The problem stems from the asynchrony of the network, in which message delays cannot be bounded, and the termination of consensus.
Although two miners mine on the same chain starting from the same genesis block, a long enough delay in messages between them could lead to having the miners seemingly agree separately on different branches containing more than $k$ blocks each, for any $k$.
This anomaly is dramatic as it can lead to simple attacks within any private network where users have an incentive to maximise their profits---in terms of coins, stock options or arbitrary 
ownership.
Moreover, this scenario is realistic in the context of private chain where the employees of a company, like NICTA, have direct access to some of the network resources.
When messages get finally delivered, the results of the disagreement creates inconsistencies.

\subsection{Uncommitting Transactions}

Figure~\ref{fig:ba} depicts the Blockchain anomaly, where a transaction $t_i$ gets committed as part of slot $i$ from the standpoint of some nodes. Based on this observation, one proposes a new transaction $t_j$ knowing that $t_i$ was successfully committed. 
Again, one can imagine a simple scenario where  ``Bob transfers an amount of money to Carole'' ($t_j$) only if ``Bob had successfully received some 
money from Alice'' ($t_i$) before.
However, once these nodes get notified of another branch of committed transactions, they decide to reorganise the branch to resolve the fork. 
The reorganisation removes the committed transaction $t_i$ from slot $i$. Later, the transaction $t_j$ 
is successfully committed in slot $i$.

\begin{figure}[t]
\centering
\includegraphics[clip=true, viewport=70 300 550 530, scale=0.5]{fig/ba}
\caption{The Blockchain anomaly: a first client issues $t_i$ that gets successfully mined and committed then a second client issues $t_j$, with $t_j$ being conditional to the commit of $t_i$ (note that $j \geq i+k$ for $t_i$ to be committed before $t_j$ gets issued), but the transaction $t_j$ gets finally reorganized and successfully committed before $t_i$, hence violating the dependency between $t_i$ and $t_j$\label{fig:ba}}
\end{figure}

The anomaly stems from the violation of the dependency between $t_j$ and $t_i$: $t_j$ occurred meaning that Bob has transferred an amount of money to Carole, however,  
$t_i$ did not occur meaning that Bob did not receive money from Alice.
Note that in Bitcoin, transaction $t_i$ gets discarded whereas in Ethereum transaction $t_i$ may in some cases be committed in slot $j$.

\subsection{Facilitating a Double-Spending Attack}
One dramatic consequence of the Blockchain anomaly is the possibility for an attacker to execute a \emph{double-spending} attack: converting, for example, all his coins into goods twice.
The scenario consists of the attacker issuing a first transaction $t_1$ that converts all its coins into goods in block $i$ 
and starting mining blocks after block $i-1$ in isolation of the network. 
As part of this mining, the attacker mines another transaction $t_2$ that also converts all its coins into goods.
The attacker then waits for the blockchain depth to reach $i+k$ after what it can collect its goods as a result of transaction $t_1$, 
then it publicises its longer chain without $t_1$ so that the chain gets adopted by the rest of network. 
$t_2$ gets committed in block $j$ and after the chain 
depth reaches $j+k$, the peer can collect its goods for the second time.
Note that even if one tries to re-commit $t_1$ later, the transaction will be invalidated because the balance is insufficient, however, the double-spending already occurred.
 

\begin{figure*}[!ht]
\centering
\includegraphics[width=1.7\textwidth, clip=true, viewport=-50 200 1500 650]{fig/execution}
\caption{Execution scenario leading to the Blockchain anomaly: $p_3$ mines a longer chain than $p_1$ without including $t_1$ and without disseminating new blocks until it forces a reorganisation that imposes $t_2$ to be committed while $t_1$ appears finally uncommitted\label{fig:execution}}
\end{figure*}

\subsection{Tracking Blockchain Anomalies}
Another dramatic aspect of the Blockchain anomaly is that it goes undetected.
More specifically, the Blockchain anomaly relies on a wrongly committed state of the blockchain. Once the wrongly committed state gets uncommitted, there is no way to a posteriori observe this problematic state and to notice that a blockchain anomaly occurred.
Although it is possible to observe that a peer mined several blocks in a row, there is no way to track down the beneficiaries of the Blockchain anomaly.
This dangerously incentivises participants of the private chain to leverage the Blockchain anomaly to attack the chain.

\section{Experimental Evaluation}\label{sec:eval}

In this section, we describe a distributed execution involving a private chain that 
results in the Blockchain anomaly.

\subsection{Experimental Setup}

We deployed a private blockchain system in our local area network using geth version 1.4.0, which is a Go implementation 
of the command line interface for running an Ethereum node.
We setup three machines connected through a 1\,Gbps network, two consisting of miners, $p_1$ and $p_3$, generating blocks and one consisting of a peer $p_2$ simply submitting transactions.
Nodes $p_1$ and $p_2$ consist of 2 machines with 4 $\times$ AMD Opteron 6378 16-core CPU running at 2.40\,GHz with 512\,GB DDR3 RAM, each.
Peer $p_3$ consists of a machine with 2 $\times$ 6-core Intel Xeon E5-260 running at 2.1\,GHz with 32\,GB DDR3 RAM.

We artificially created a network delay by transiently annihilating connection points between machines. 
Note that such artificial delays could be reproduced by simply unplugging an ethernet cable connecting a computer to the company network and does not require an employee to 
access physically a switch room. 

Also, we made sure $p_3$ would mine faster than $p_1$, by mining with the 24 hardware threads of $p_3$ and a single hardware thread of $p_1$. 
The same speed difference could be obtained between a loaded server and a  
server that does run any other service besides mining.
Note that hardware characteristics may also help one machine mine faster than the rest of a private chain network. For example, 
a machine equipped with an AMD Radeon R9 290X would mine faster in Ethereum than a pool of 25 machines, each of them mining 
with an Intel Core i7.
The same setting as the one used above could allow us to conduct a 51-percent attack, however, the 51-percent attack is not necessary to encounter a blockchain 
anomaly. For example in a private blockchain adopting the longest branch, if the attacker only owns a minority of the mining power then not adapting the block size 
or the difficulty of the crypto-puzzle adequately with respect to the network delay could result in having the system adopting $p_3$'s branch anyway. 
More precisely, the blockchain anomaly could occur with $n$ nodes with all attackers owning totally a $q$-th of the mining power if 
each correct node mines at a rate of $q/n$ blocks every block propagation delay.


\subsection{Distributed Execution}

The default case of the anomaly occurs with a conditional transaction. A peer in the system has a condition that it will only send money if 
some other nodes transferred him some coins successfully.
As mentioned previously, for the transaction to be committed, there must be at least $k$ blocks mined 
after the block containing the transaction. 
In our experiment, the client only sends coins once the peer owns a verified amount of coins. 
The peer performs a transaction $t_2$ only if it was shown by the system that the previous transaction $t_1$ had been committed and the money was successfully transferred to its wallet.


Figure~\ref{fig:execution} 
depicts the distributed execution leading to the Blockchain anomaly where $p_1$, $p_2$ and $p_3$ exchange information about the blockchain whose genesis block is denoted `G'.
\begin{enumerate}
\item Peer $p_1$ mines a first block after the genesis block and informs $p_2$ and $p_3$ to update their view of the blockchain state.
\item Peer $p_3$ mines a second block and informs $p_1$ and $p_2$ of this new block.
\item A network delay is introduced between nodes $p_1$ and $p_2$ on the one hand, and peer $p_3$ on the other hand. 
\item Peer $p_1$ submits transaction $t_1$ and informs $p_2$ but fails to inform $p_3$ due to the network delay. In the meantime, peer $p_3$ starts mining a long series of 30 blocks.
\item Peer $p_1$ mines a block that includes transaction $t_1$ and mines 12 subsequent blocks; $p_1$ then informs  $p_2$ but not $p_3$ due to the network delay.
\item Peer $p_2$ receives the notification from $p_1$ that $t_1$ is committed because its block and $k$ subsequent blocks are mined; then $p_2$ decides to submit transaction $t_2$ that should only execute after $t_1$.
\item The network becomes responsive and $p_3$ who receives the information that $t_2$ is submitted, mined $t_2$ in a block along with 12 subsequent blocks.
\item Once nodes $p_1$ and $p_2$ receive from $p_3$ the longest chain of $45$ blocks, they adopt this chain, discarding or postponing the blocks that were at indices $2$ to $15$, including the transaction $t_1$, of their chain.
\item All nodes agree on the final chain of 45 blocks in which $t_2$ is committed and where $t_1$ is finally
not committed before $t_2$.
%
\end{enumerate}

\begin{figure}[t]
\centering
\includegraphics[width=0.5\textwidth, clip=true, viewport=0 0 500 460]{fig/eight_test}
\caption{Automated executions of the Blockchain anomalies over a period of $50\,min$, the execution is non-determinstic due to the randomness of the mining process and the network delay between nodes\label{fig:eight_test}}
\end{figure}

This execution results in a violation of the conditional property of transaction $t_2$ stating that $t_2$ should only execute if $t_1$ executed first.
This violation occurred because transaction $t_1$ had been included in one chain, decided and agreed by two of the participants, it was then changed after the message of the third participant was finally delivered to the rest of the network.  

\subsection{Automating the Reproduction of the Anomaly}\label{sec:script}

To illustrate the anomaly, we wrote a script that automated the execution depicted in Figure~\ref{fig:execution}. 
Figure~\ref{fig:eight_test} represents the execution of a script that execute 8 iterations of the Blockchain anomaly over a period of 50 minutes. 
Again the goal is to wait until $t_1$ gets committed before issuing $t_2$ that ends up being committed while $t_1$ does not appear to be. 
Note that this is similar to Figure~\ref{fig:ba} except that $t_2$ is not necessarily included at the index $t_1$ occupied initially. In particular, the block in which $t_2$ gets included varies from one iteration to another due to the non-determinism of the execution as indicated by the curve with square points.
%
This non-determinism is explained by the randomness of the mining process and the latency of the network that also impacts the time it takes for the consensus to terminate (curve with triangle points) in each iteration of the experiment. Note that we use $k=11$ in this experiment, making sure that 12 blocks were successfully mined, as recommended since the release of Ethereum 1.3.5 Homestead, for the consensus to terminate.

As expected, in each of these eight cases we observed the Blockchain anomaly: even though $t_2$ was issued after $t_1$ was successfully observed as committed, if the messages get successfully delivered, then the reorganisation results in $t_2$ being committed while $t_1$ is not.  
Finally, we can observe that the time to disseminate 
a committed transaction to all the nodes of the network is much shorter than the termination delay. 
This is due to the time needed to mine a block, which is significantly larger than the latency of our network.

\begin{figure}[t]
\centering
\includegraphics[width=0.5\textwidth, clip=true, viewport=0 0 500 460]{fig/plot_broken_axis}
\caption{The proportion of transaction swaps observed does not depend on the difficulty, as opposed to the consensus termination that increases with the difficulty\label{fig:k6}}
\end{figure}

\begin{figure*}
\begin{lstlisting}[language=JavaScript]
contract conditionalPayment {

	uint32 paid; // to keep track of the amount paid by Alice when deciding on Bob's transfer
	mapping (address => uint256) public balances; // map addresses to their respective balance
	address A = 0x57ec7927841e2d25aad5f335e3b701369b177392; // the address of Alice's account 
	address B = 0x5ae58375c89896b09045de349289af9034902905; // the address of Bob's account 

	modifier onlyFrom(address _address) {  // enables execution of functions depending on invoker
		if (msg.sender != _address) throw; 
		_ 
	}

	function sendTo(address B, uint32 _amount) onlyFrom(A) { // Alice sends money to Bob
		if (balances[A] >= _amount) {  // checking the sufficiency of funds available
			balances[A] -= _amount;
			balances[B] += _amount;
			paid = _amount;  // sorting the amount paid
		}
	}
	
	function sendIfReceived(address C, uint32 _amount) onlyFrom(B) { // Bob sends money to Carole
		if (paid > _amount) { // only if the previous payment was sufficient
			balances[B] -= _amount;
			balances[C] += _amount;
		} else {
			throw; // cancel contract execution
		}
	}
}
\end{lstlisting}
\caption{A smart contract written in the Solidity programming language to replace transactions prone to the blockchain anomaly: the \texttt{sendIfReceived} function checks that the transfer from A to B occurred before executing the transfer from B to C\label{fig:sc2}}
\end{figure*}

\subsection{Swap Frequency with Different Mining Difficulties}

In the previous experiment, we used the default Ethereum difficulty (0x400) and automated the execution with a precise script. To better understand the cause of the anomaly we tried reproducing the anomaly by hand (without the script) with larger difficulties.

Figure~\ref{fig:k6} depicts the average number of blockchain anomalies leading to a swap (where both $t_2$ and $t_1$ are eventually committed in reverse order) occurring in our private chain for 6 different mining difficulties.
Each bar results from the average number of anomalies observed during 6 manual runs of the scenario depicted in Figure~\ref{fig:execution}. More precisely, the figure reports the \emph{swap} scenarios, where the first transaction $t_1$ gets successfully committed before $t_2$ gets issued, and eventually transactions $t_1$ and $t_2$ appear committed in reverse order.


We ran this particular experiment with $k=10$
for the termination of consensus, meaning that $t_1$ was mined in block at index $i$ and it was committed once the chain depth reached $i+10$ blocks. (We presented the anomaly in the case where $k=11$ in Section~\ref{sec:script}.)
At first, we thought that the occurrence of this Blockchain anomaly was dependent on the difficulty of mining a block: the faster a block 
could be mined, the more likely the anomaly would occur.
To validate this, we varied the mining difficulties from 0x2000 to 0x40000 and measure the frequency of the Blockchain anomaly 
over 6 executions for each difficulty value. We observe that there was no significant correlation between the difficulty and the 
occurrence of the anomaly and that in average we could observe the swap 6 times out of 10.

We also measured the time it would take for consensus to terminate in these scenarios (upper curve) and observed, as expected, that 
the termination time was proportional to the difficulty. This is explained by the fact that the difficulty impacts the time needed to mine a 
block, which in turn, impacts the time it takes to mine $k+1$ blocks for termination. In addition we report the time it would take for a 
transaction in a mined block to be disseminated to all the nodes of the network (bottom curve) and observed that it was not related to the 
difficulty.
Finally, we observed that having a network delay greater than the time to mine was foundational to the observation of the anomaly. 

%
%
%
%

\begin{figure*}
\begin{lstlisting}[language=JavaScript]
contract problematicConditionalPayment {
	...
	function checkPayment(address B, uint32 _amount) onlyFrom(B) constant returns (bool result) { 
		if (paid > _amount) { // check that Alice paid
			return true;
		} else throw;
	}
	
	function sendIfReceived(address C, uint32 _amount) onlyFrom(B) { // Bob sends money to Carole
		balances[B] -= _amount;
		balances[C] += _amount;
	}
}
\end{lstlisting}
\caption{Executing the transfer to Carole in a separate function may suffer from the Blockchain anomaly\label{fig:sc3}}
\end{figure*}

} 

\section{Vulnerability of the {\sc Ghost} Protocol}\label{sec:proof}

In this Section, we show that the Balance attack makes a blockchain system based on {\sc Ghost} (depicted in Alg.~\ref{alg:ghost}) block oblivious. 
A malicious user can issue a Balance attack with less than half of the mining power by delaying the network. 
Let us first summarize previous and new notations in Table~\ref{table:analysis}.

\begin{table}[ht!]
\begin{tabular}{r|l}
t & total mining power of the system (in million hashes per second, MH/s)\\
d & difficulty of the crypto-puzzle (in million hashes, MH)\\
$\rho$ & fraction of the mining power owned by the malicious miner (in percent, \%)\\
k & the number of communication subgraphs \\
$\tau$ & time during which communication between subgraphs is disrupted (in seconds, s) \\
$\mu_c$ & mean of the number of blocks mined by each communication subgraph during $\tau$ \\
$\mu_m$ & mean of the number of blocks mined by the attacker during time $\tau$ \\
$\Delta$ & the maximum difference of mined blocks for the two subgraphs
\end{tabular}
\caption{Notations of the analysis\label{table:analysis}}
\end{table}

For the sake of simplicity in the proof, we assume that $k=2$ and  
$\sum_{\forall v \in V_1} \lit{pow}(v) =  \sum_{\forall v' \in V_2} \lit{pow}(v')$ so that 
the communication is delayed between only two communication subgraphs of equal mining power.
We defer the proof for the general case where $k\geq 2$ to Appendix~\ref{app:proof}.

\begin{figure*}[t]
\begin{center}
\includegraphics[scale=0.65]{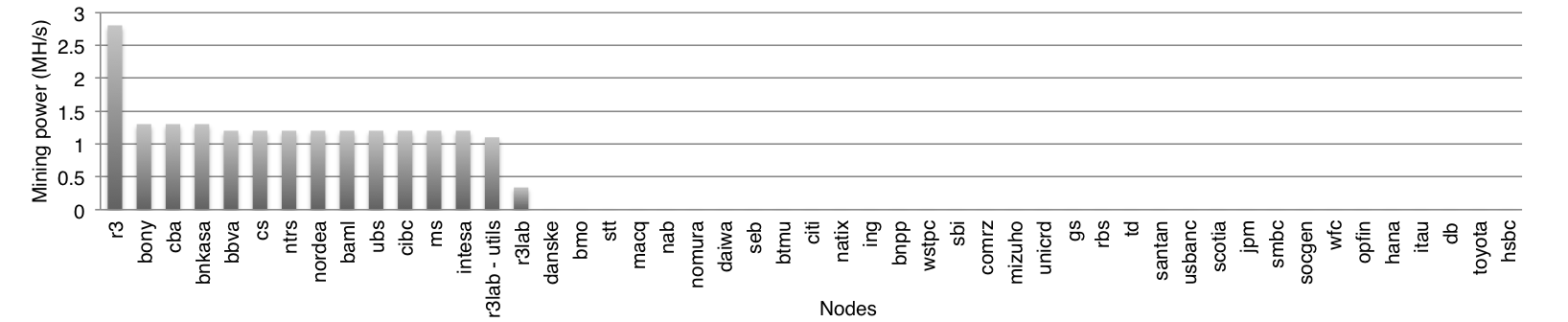}
\caption{The mining power of the R3 Ethereum network as reported by \texttt{eth-netstats} as of June 2016\label{fig:mining-power}}
\end{center}
\end{figure*}

As there is no better strategy for solving the crypto-puzzles than random trials, we consider that 
subgraphs $G_1$ and $G_2$ mine blocks during delay $\tau$.
During that time, each of $G_1$ and $G_2$ performs a series of $n = \frac{1-\rho}{k}t\tau$ independent and identically distributed Bernoulli trials that returns one in case of success with probability 
$p = \frac{1}{d}$ and 0 otherwise. 
Let the sum of these outcomes for subgraphs $G_1$ and $G_2$ be the random variables $X_1$  and $X_2$, respectively,
each with a binomial distribution and mean: 
\begin{equation}
\mu_c = np = \frac{(1-\rho)t\tau}{2d}. \label{eq:mean}
\end{equation}

Similarly, the mean of the number of blocks mined by the malicious miner during time $\tau$ is
\begin{equation}
\mu_m = \frac{\rho t\tau}{d}. \notag 
\end{equation}


From line~\ref{line:tau} of Alg.~\ref{alg:paa}, we know that $\tau \geq 
\frac{(1-\rho)6d\log(\frac{4}{\varepsilon})}{4\rho^2t}$
which leads to:
\begin{eqnarray}
\tau &\geq& \frac{(1-\rho)6d\log(\frac{4}{\varepsilon})}{4\rho^2t}, \notag \\
\frac{(1-\rho)t\tau}{2d} &\geq& \frac{3(1-\rho)^2\log(\frac{4}{\varepsilon})}{4\rho^2}. \notag
\end{eqnarray}
By Eq.~\ref{eq:mean} we have:
\begin{eqnarray}
\mu_c &\geq& \frac{3(1-\rho)^2\log(\frac{4}{\varepsilon})}{4\rho^2}, \notag \\
\frac{4\rho^2 \mu_c}{3(1-\rho)^2} &\geq& \log\left(\frac{4}{\varepsilon}\right), \notag \\
-\frac{4\rho^2 \mu_c}{3(1-\rho)^2} &\leq& \log\left(\frac{\varepsilon}{4}\right), \notag \\
e^{-\frac{4\rho^2 \mu_c}{3(1-\rho)^2}} &\leq& \frac{\varepsilon}{4}, \notag \\
1-4e^{-\frac{4\rho^2 \mu_c}{3(1-\rho)^2}} &\geq& 1-\varepsilon.   \label{eq:epsilon}
\end{eqnarray}

The attack relies on minimizing the difference in mined blocks between any pair of communication subgraphs. 

\begin{lemma}\label{lem:delta}
After the communication is re-enabled,
the expectation of the number of blocks mined by the attacker is greater than the difference $\Delta = |X_1 - X_2|$ of the number of blocks mined on the two subgraphs $G_1$ 
and $G_2$, with probability $1-\varepsilon$. 
\end{lemma}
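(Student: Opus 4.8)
The plan is to bound $\Delta = |X_1 - X_2|$ from above by a quantity that, with probability at least $1-\varepsilon$, stays below $\mu_m$, the expected number of blocks mined by the attacker. Since $X_1$ and $X_2$ are independent binomials with the same mean $\mu_c$ (Eq.~\ref{eq:mean}), the difference $X_1 - X_2$ has mean zero, so the natural route is a concentration argument: show that $X_1$ and $X_2$ each stay within a small multiplicative factor of $\mu_c$, hence $\Delta$ is at most roughly $2\delta\mu_c$ for the relevant deviation $\delta$, and then check that $2\delta\mu_c \le \mu_m = \frac{\rho t \tau}{d} = \frac{2\rho}{1-\rho}\mu_c$.

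First I would apply a Chernoff bound to each of $X_1$ and $X_2$. Using the multiplicative upper-tail bound $\Pr[X_i \ge (1+\delta)\mu_c] \le e^{-\delta^2\mu_c/3}$ (valid for $0<\delta\le 1$), and the analogous lower-tail bound, a union bound over the two subgraphs and the two tails gives $\Pr[\,|X_i - \mu_c| \ge \delta\mu_c \text{ for some } i\,] \le 4 e^{-\delta^2\mu_c/3}$. On the complementary event, both $X_1$ and $X_2$ lie in $[(1-\delta)\mu_c,(1+\delta)\mu_c]$, so $\Delta = |X_1 - X_2| \le 2\delta\mu_c$. The choice that makes this match the threshold in Algorithm~\ref{alg:paa} is $\delta = \frac{2\rho}{1-\rho}$ — note the paper writes $\mu_c \ge \frac{3(1-\rho)^2\log(4/\varepsilon)}{4\rho^2} = \frac{3\log(4/\varepsilon)}{\delta^2}$, i.e. $\delta^2\mu_c/3 \ge \log(4/\varepsilon)$, which is exactly $4e^{-\delta^2\mu_c/3} \le \varepsilon$, matching Eq.~\ref{eq:epsilon}.

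Next I would combine the two pieces. On the good event (probability $\ge 1-\varepsilon$ by Eq.~\ref{eq:epsilon}) we have $\Delta \le 2\delta\mu_c = \frac{4\rho}{1-\rho}\mu_c$. Meanwhile $\mu_m = \frac{\rho t\tau}{d} = \frac{2\rho}{1-\rho}\cdot\frac{(1-\rho)t\tau}{2d} = \frac{2\rho}{1-\rho}\mu_c$. So I actually want the sharper deviation $\delta = \frac{\rho}{1-\rho}$, giving $\Delta \le 2\delta\mu_c = \frac{2\rho}{1-\rho}\mu_c = \mu_m$; redoing the constant, $\delta^2\mu_c/3 \ge \log(4/\varepsilon)$ then requires $\mu_c \ge \frac{3(1-\rho)^2\log(4/\varepsilon)}{\rho^2}$, which is exactly four times the bound displayed — this is where I would double-check whether the paper folds a factor of $2$ into $\Delta$ differently (e.g. bounding $X_1-X_2$ directly by a single one-sided Chernoff bound on a combined variable rather than via a union bound on each $X_i$), and align the constants accordingly. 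Either way the structure is: Chernoff on each subgraph $\Rightarrow$ union bound $\Rightarrow$ $\Delta$ small $\Rightarrow$ compare to $\mu_m$.

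The main obstacle is getting the constants to line up exactly with the stated delay $\tau \ge \frac{(1-\rho)6d\log(4/\varepsilon)}{4\rho^2 t}$, which is purely bookkeeping about which Chernoff variant is used and whether one bounds $|X_1-X_2|$ in one shot or bounds each $X_i$ separately; the probabilistic content is entirely routine. A secondary subtlety worth a sentence is the independence of $X_1$ and $X_2$ and the fact that we are comparing an expectation ($\mu_m$, the attacker's blocks) to a random quantity ($\Delta$) — the lemma only claims $\mu_m \ge \Delta$ with high probability, not that the attacker's actual block count exceeds $\Delta$, so no concentration on the attacker's side is needed here (that presumably comes in the theorem that uses this lemma).
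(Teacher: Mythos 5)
Your proposal is correct and follows essentially the same route as the paper's proof: Chernoff bounds on each $X_i$, a combination step yielding $\Pr[\Delta < 2\delta\mu_c] > 1-4e^{-\frac{\delta^2}{3}\mu_c}$, and then a choice of $\delta$ relating $2\delta\mu_c$ to $\mu_m$ together with the lower bound on $\tau$ (which is exactly what Eq.~\ref{eq:epsilon} encodes). The only structural difference is the combination step: you take a union bound over the two subgraphs, whereas the paper squares the two-sided bound for a single $X_i$ (using independence of $X_1,X_2$) and then applies Bernoulli's inequality; both land on the same $1-4e^{-\frac{\delta^2}{3}\mu_c}$, and your version does not even need independence. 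The constant mismatch you flagged is real, and it sits in the paper rather than in your argument: the paper substitutes $\delta = \frac{2\rho}{1-\rho}$, for which $2\delta\mu_c = \frac{4\rho}{1-\rho}\mu_c = 2\mu_m$, so its displayed chain actually establishes $\Pr[\Delta < 2\mu_m] > 1-\varepsilon$; the choice consistent with the stated conclusion is your $\delta = \frac{\rho}{1-\rho}$, which gives $\Delta < \mu_m$ but then requires $\mu_c \ge \frac{3(1-\rho)^2\log(4/\varepsilon)}{\rho^2}$, i.e.\ a delay four times the one on line~\ref{line:tau} of Alg.~\ref{alg:paa}. Your closing remark is also aligned with the paper's reading of the lemma: only the attacker's expected block count $\mu_m$ is compared to the random $\Delta$, so no concentration on the attacker's side is needed here.
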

\begin{proof}
The communication is re-enabled at line~\ref{line:stop} of Alg.~\ref{alg:paa}.
At this point in the execution, the probability that the numbers of blocks mined by each subgraph are within a $\pm \delta$ factor from their mean, is bound for $0 < \delta < 1$ and 
$i\in \{1,2\}$
by Chernoff bounds~\cite{MR95}: 
\begin{equation}
\left\{\begin{array}{ll}
\Pr\left[X_i \geq (1+\delta)\mu_c\right] &\leq e^{-\frac{\delta^2}{3}\mu_c},\notag
\\
\Pr\left[X_i \leq (1-\delta)\mu_c\right]& \leq e^{-\frac{\delta^2}{2}\mu_c}. 
\end{array}\right.
\end{equation}
Thus, we have 
\begin{eqnarray}
\Pr[|X_i -\mu_c| < \delta \mu_c] &>& 1-2e^{-\frac{\delta^2}{3}\mu_c}. \notag
\end{eqnarray}

Observe that the probability that these two random variables 
are both within a $\pm \delta\mu_c$ is lower than the probability that their 
difference $\Delta$ is upper-bounded by $2\delta \mu_c$:
\begin{equation}
\left(\Pr\left[|X_i - \mu_c| < \delta \mu_c\right]\right)^2 \leq \Pr[\Delta < 2\delta \mu_c].\notag
\end{equation}
Thus, we obtain:
\begin{equation}
\Pr[\Delta < 2\delta \mu_c] > \left(1-2e^{-\frac{\delta^2}{3}\mu_c}\right)^2.\label{eq:power-2}
\end{equation}

As $\mu_c \geq 0$, we have that:
\begin{eqnarray}
{-\frac{\delta^2}{3}\mu_c} &\leq& 0, \notag \\
-e^{\frac{-\delta^2}{3}\mu_c} &\geq& -1, \notag
\end{eqnarray}
and we can apply the Bernoulli inequality to Eq.~\ref{eq:power-2}:
\begin{eqnarray}
\Pr[\Delta < 2\delta \mu_c] &>& 1-4e^{-\frac{\delta^2}{3}\mu_c}. \notag 
\end{eqnarray}

If we replace $\delta$ by $\frac{2\rho}{1-\rho}$, we obtain:
\begin{eqnarray}
\Pr[\Delta < \mu_m] &>& 1-4e^{-\frac{4\rho^2 \mu_c}{3(1-\rho)^2}}. \notag 
\end{eqnarray}
By Eq.~\ref{eq:epsilon}, we can see that the expectation of the number of blocks mined by the attacker is strictly 
greater than $\Delta$ with high probability:
\begin{eqnarray}
\Pr[\Delta < \mu_m] &>& 1-\varepsilon.\notag
\end{eqnarray}
\end{proof}


\begin{theorem}\label{thm:block-obliviousness}
A blockchain system that selects a main branch based on the {\sc Ghost} protocol (Alg.~\ref{alg:ghost}) is block oblivious.
\end{theorem}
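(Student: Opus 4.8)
The plan is to show that the Balance attack described in Algorithm~\ref{alg:paa}, when run against a {\sc Ghost}-based system, satisfies both clauses of Definition~\ref{def:obliviousness} with probability $1-\varepsilon$. The two clauses to establish are: (1) the merchant in subgraph $G_i$ observes $\ms{tx}$ as committed, and (2) the attacker can later remove $\ms{tx}$ from the main branch. I would treat these in order.

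First, for clause (1): during the disruption window $\tau$, the subgraph $G_i$ mines in isolation and, being a correct subgroup running {\sc Ghost}, builds a local main branch that includes $\ms{tx}$ (issued at line~\ref{line:spend}). Since $\tau$ is chosen large enough that $\mu_c$ is a sizeable constant (by Eq.~\ref{eq:mean} and the bound on $\tau$ at line~\ref{line:tau}), $G_i$ mines at least $m$ blocks on top of the block containing $\ms{tx}$ with probability at least $1-\varepsilon'$ for a suitably small $\varepsilon'$ — this is again a Chernoff-bound argument on the lower tail of $X_i$, analogous to the one used in Lemma~\ref{lem:delta}. Hence $\ms{tx}$ becomes locally committed at the merchant's node in $G_i$ per the Transaction commit definition, and the merchant ships the goods (line~\ref{line:stop} onward).

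Next, for clause (2): when communication is restored at line~\ref{line:stop}, every node merges the two local views $\ell_1$ and $\ell_2$ together with the attacker's privately mined chain $\ell_i$, which was started from a block $b_2$ present only in $G_j$'s view (lines~\ref{line:b2}--\ref{line:mineb2}) and forwarded to $G_j$. The key observation is that under {\sc Ghost} the branch through $b_2$ accrues the weight of \emph{all} of $G_j$'s subtree plus the attacker's $\mu_m$ blocks, while the branch containing $\ms{tx}$ carries only $G_i$'s subtree. By Lemma~\ref{lem:delta}, $\mu_m > \Delta = |X_1 - X_2|$ with probability $1-\varepsilon$, so the attacker's contribution more than compensates whatever lead $G_i$'s subtree had over $G_j$'s; consequently the {\sc Ghost} selection rule (Alg.~\ref{alg:ghost}, lines~\ref{line:eth-get-branch-start}--\ref{line:eth-get-branch-end}) picks, at the first point of divergence, the child leading away from $\ms{tx}$'s branch. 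Thus $\ms{tx}$ is evicted from the main branch and the attacker reissues the coins (line~\ref{line:doublespend}). A union bound over the (small) failure probabilities of clauses (1) and (2) keeps the total success probability at $1-\varepsilon$ after rescaling constants.

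The main obstacle I anticipate is the argument that the weight comparison at the \emph{first} forking block — rather than merely the total number of blocks in each local chain — goes the attacker's way: {\sc Ghost} is recursive and greedy from the genesis, so one must argue that the decisive imbalance occurs precisely at the block where $G_i$'s and $G_j$'s histories diverged (or at $g$ itself), and that the attacker's subchain, grafted onto $b_2$, is counted entirely within the $G_j$-side subtree at that fork. Making this rigorous requires being careful that $b_2$ lies strictly on the $G_j$ side of the relevant fork and that no correct block of $G_i$ can later rescue $\ms{tx}$'s subtree once the merge has happened — which is why the attack forwards $\ell_i$ to $G_j$ \emph{before} $\tau$ expires. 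I would state this as the crux and reduce everything else to Lemma~\ref{lem:delta} plus the one-sided Chernoff estimate for clause (1).
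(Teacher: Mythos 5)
Your proposal is correct and takes essentially the same route as the paper: the attack of Alg.~\ref{alg:paa} combined with Lemma~\ref{lem:delta} and the fact that {\sc Ghost} (lines~\ref{line:num-desc-start}--\ref{line:num-desc-end} of Alg.~\ref{alg:ghost}) counts every mined block, so the attacker's chain grafted at $b_2$ tips the subtree weight and causes the view containing $\ms{tx}$ to be discarded. You are in fact somewhat more thorough than the paper's own two-sentence proof, which leaves clause~(1) of Definition~\ref{def:obliviousness} (the merchant observing $\ms{tx}$ as committed) and the fork-point weight comparison implicit, whereas you spell out the Chernoff lower-tail argument for the former and explicitly flag the latter as the crux.
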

\begin{proof}
By lines~\ref{line:num-desc-start}--\ref{line:num-desc-end} of Alg.~\ref{alg:ghost}, we know that {\sc Ghost} 
counts every mined blocks to compute the weight of a subtree, and to select one blockchain view and discard 
the other. 

Since the expected number of blocks mined by the attacker is greater than the difference $\Delta$ with  
probability $1-\varepsilon$, we know that when the timer expires at line~\ref{line:stop} of Alg.~\ref{alg:paa}, the 
attacker will make the system discard the blockchain view of either $G_1$ or $G_2$ by sending its blockchain 
view to the other subgraph, hence making the blockchain system block oblivious.
\end{proof}


%
%

%
%

\section{Analysis of the R3 Testbed}\label{sec:example}

The statistics of the R3 testbed were gathered through the \texttt{eth-netstat} applications at the end of June 2016.\footnote{\url{http://r3n1-utils.gcl.r3cev.com/}.}
R3 is a consortium of more than 50 banks that has tested blockchain systems and in particular Ethereum in a consortium private chain context over 2016.\footnote{\url{http://www.coindesk.com/r3-ethereum-report-banks}}
As depicted in Figure~\ref{fig:mining-power}, the network consisted at that time of $|V| = 50$ nodes among which only 15 were mining. The mining power of the system was about $20\,$MH/s and the most powerful miner mines at 2.4\,MH/s or 12\% of the total mining power while the difficulty of the crypto-puzzle was observed close to 30\,MH.  

\begin{table}[ht!]
\begin{center}
\begin{tabular}{r|l}
Number of nodes & 50 \\
Number of miners & 15 \\
Total mining power (MH/s) & 20 \\
Mining power of the most powerful miner (MH/s) & 2.4 \\
Difficulty (MH) & 30 
\end{tabular}
\caption{The R3 settings used in the analysis\label{table:r3}}
\end{center}
\end{table}

%

We assume that the attacker selects communication edges $E_0$ between two subgraphs $G_1$ and $G_2$ so that their mining power is $8.8\,$MH/s each.
The probability $p$ of solving the crypto-puzzle per hash tested is $\frac{1}{30\times10^{6}}$ so that the mean 
is $\mu_c = np = 95.3$ if we wait for 19 minutes and 40 seconds. 
The attacker creates, in expectation,
a block every 
$\frac{30}{2.4}=12.5$ seconds or $\lfloor\frac{1180}{12.5}\rfloor = 94$ blocks during the 19 minutes and 
40 seconds.
Hence let us select $\delta$ such that the attacker has a chance to mine more than $2\delta\mu_c$ blocks during that period, i.e., $94 = 2\delta\mu_c+1$ implying that $\delta = 0.1343$.
The probability of the difference exceeding $94$ is upper bounded by $4e^{-\frac{\delta^2}{3}\mu_c}$ 
leading to a bound of 49.86\%.

To conclude a single miner of the R3 testbed needs to delay edges of $E_0$ during less than 20 minutes to execute a Balance attack and discards blocks that were previously decided (even if $m=18$ was chosen) with a chance of success greater than $\frac{1}{2}$.

\subsection{Tradeoff between communication delays and mining power}
Malicious nodes may have an incentive to form a coalition in order to exploit the Balance attack to double spend. In this case, it is easier for the malicious nodes to control a larger portion of the mining 
power of the system, hence such a coalition would not need to delay messages as long as in our example of Section~\ref{sec:example}. For simplicity, we again consider the case where the attacker delay messages between $k=2$ communication subgraphs.

To illustrate the tradeoff between communication delay and the portion of the mining power controlled by the attacker, we consider the R3 testbed with a 30\,MH total difficulty, a 20\,MH/s total mining power and plot the 
probability as the communication delay increases for different portions of the mining power controlled by the adversary.
Figure~\ref{fig:proba-vs-power} depicts this result.  As expected, the probability increases exponentially fast as the delay increases, and the higher the portion of the mining power is controlled by the adversary the faster the probability increases. In particular, in order to issue 
a balance attack with 90\% probability, 51 minutes are needed for an adversary controlling 12\% of the total mining power whereas only 11 minutes are sufficient for an adversary who controls 20\% of the mining power. 
\vincent{Look at the time needed for 60\% or 40\% of the mining power.}

\begin{figure}[t]
\begin{center}
\includegraphics[scale=0.55]{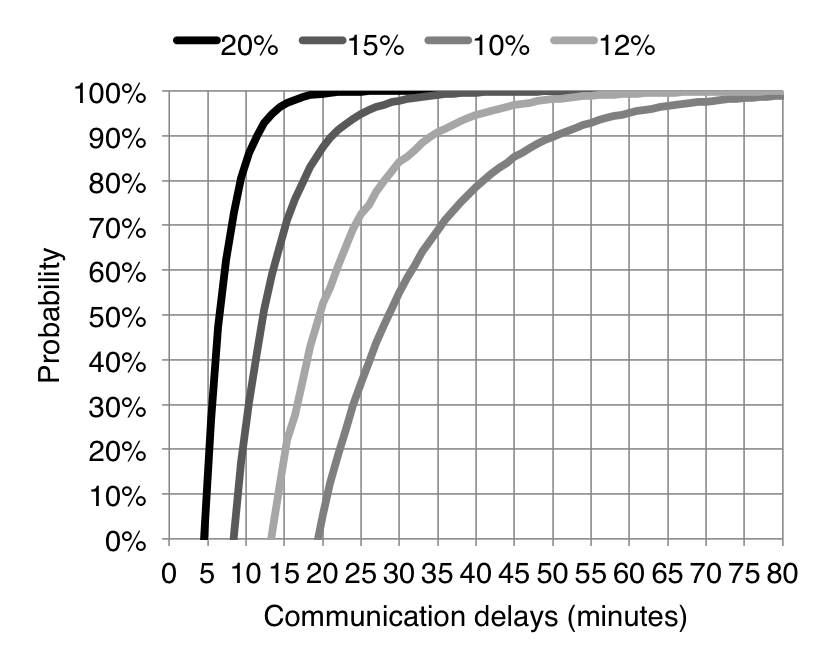}
\caption{Probability of Balance attack in the R3 testbed as the communication delay increases for different portions of the mining power controlled by the attacker\label{fig:proba-vs-power}}
\end{center}
\end{figure}


%

\subsection{Tradeoff between communication delays and difficulties}

Another interesting aspect of proof-of-work blockchain is the difficulty parameter $d$. As already mentioned, this parameter impacts the 
expected time it takes for a miner to succeed in solving the crypto-puzzle. When setting up a private chain, one has to choose a difficulty to make sure the miners would mine at a desirable pace. A too high difficulty reduces the throughput of the system without requiring leader election~\cite{EGSvR16} or consensus sharding~\cite{LNZ16}.
A too low difficulty increases
the probability for two correct miners to solve the crypto-puzzle before one can propagate the block to the other, a problem of Bitcoin that motivated the {\sc Ghost} protocol~\cite{SZ15}. 

\begin{figure}[ht!]
\begin{center}
\includegraphics[scale=0.55]{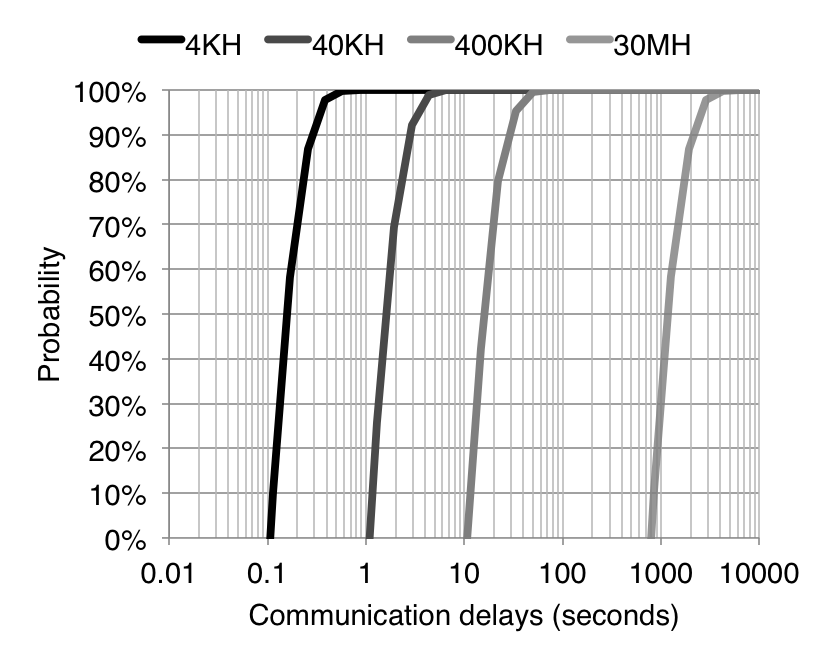}
\caption{Probability of Balance attack in the R3 testbed (with 20\,MH/s of total mining power, 12\% of the mining power at the attacker) for different difficulties as the communication delay increases\label{fig:proba-vs-difficulty}}
\end{center}
\end{figure}

Figure~\ref{fig:proba-vs-difficulty} depicts the probability of the Blockchain anomaly when the communication delay increases for different difficulties without considering the time for a block to be decided. 
Again, we consider the R3 Ethereum network with a total mining power of 30\,MH/s and an attacker owning 
$\rho = 12\%$ of this mining power and delaying communications between $k=2$ subgraphs of half of the remaining mining power ($\frac{1-\rho}{2} = 44\%$) each.
The curve labelled 4\,KH indicates a difficulty of 4000 hashes, which is also the difficulty chosen by default by Ethereum when setting up a new private chain system. This difficulty is dynamically adjusted by Ethereum at runtime to keep the mining block duration constant in expectation, however, this adaptation is dependent on the visible mining power of the system. The curve labelled 30\,MH indicates the probability for the difficulty observed in the R3 Ethereum network. We can clearly see that the difficulty impacts the probability of the Balance attack. This can be explained by the fact that the deviation of the random variables $X_1, ..., X_k$ from their mean $\mu_c$ is bounded for sufficiently large number of mined blocks.



\section{Experimenting the Balance Attack on an Ethereum Private Chain}\label{sec:expe}

In this section, we experimentally produce the attack on an Ethereum private chain involving up to 18 physical 
distributed machines. 
To this end, we configure a realistic network with 15 machines dedicated to mining as in the R3 Ethereum 
network
we described in Section~\ref{sec:example} and 3 dedicated network switches.

All experiments were run on 18 physical machines of the Emulab environment where a network topology was configured using ns/2 as depicted in Figure~\ref{fig:emulab1}.  The topology consists of three local area networks configured through a ns/2 configuration file with 20\,ms latency and 100\,Mbps bandwidth. 
All miners run the \texttt{geth} Ethereum client v.1.3.6 and the initial difficulty of the crypto-puzzle is set 
to 40\,KH. The communication graph comprises the subgraph $G_1$ of 8 miners that includes the attacker and 7 correct miners and a subgraph $G_2$ of 7 correct miners. 


\begin{figure}
\begin{center}
\includegraphics[scale=0.4, clip=true, viewport=200 20 900 300]{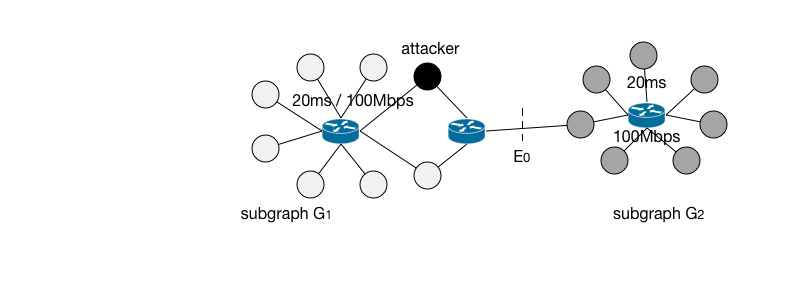}
\caption{The topology of our experiment involving 15 miners with subgraph $G_1$ including the attacker depicted in black and subgraph $G_2$ depicted in grey\label{fig:emulab1}}
\end{center}
\end{figure}

\subsection{Favoring one blockchain view over another}\label{ssec:selection}

We run our first experiment during 2 minutes.
We delayed the link $E_0$ during 60 seconds so that both subgraphs mine in isolation from each other during that time and end up with distinct blockchain views. After the delay we take a snapshot, at time $t_1$, 
of the blocks mined by each subgraphs and the two subgraphs start exchanging information normally leading to a consensus regarding the current state of the blockchain.  
At the end of the experiment, after 2 minutes we take another snapshot 
$t_2$ of the blocks mined by each subgraph. 

\begin{table}
\begin{center}
\includegraphics[scale=0.55,clip=true, viewport=0 50 400 750]{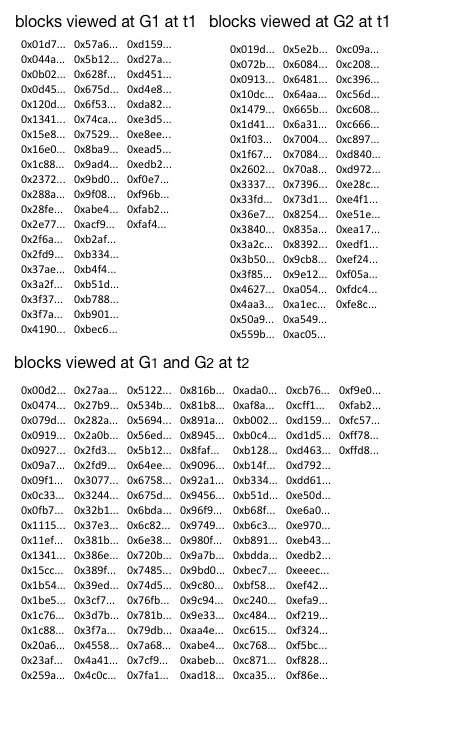}
\caption{Hash prefixes of the blocks of the main branch (excluding uncles) selected by the subgraphs $G_1$ and $G_2$; the blocks of $G_2$ at time $t_1$ do no longer appear at time $t_2$\label{table:block-hashes}}
\end{center}
\end{table}

Table~\ref{table:block-hashes} lists the prefix of the hashes of blocks (excluding uncles) of the blockchain views of $G_1$ and $G_2$ at times $t_1$, while the two subgraphs did not exchange their view, and at time $t_2$, after the subgraphs exchanged their blocks. Note that we did not represent the uncle blocks to focus on the main branches.
We observe that the blockchain view of the subgraph $G_1$ was adopted as the valid chain while the 
other blockchain view of the subgraph $G_2$ was not. 
For example, one of the listed blocks viewed by $G_1$ at time $t_1$ whose hash starts with \texttt{0xfab2} appears as well in the final blockchain at time $t_2$.
More generally, we can see that only blocks of the blockchain view of $G_1$ at time $t_1$ were finally adopted at time $t_2$ as part of the blocks of the global view of the blockchain. 
All the blocks of $G_2$ at time $t_1$ were discarded from the blockchain by time $t_2$.

\subsection{Blocks mined by an attacker and two subgraphs}

We now report the total number of blocks mined, especially focusing on the creation of uncle blocks.
More precisely, 
we compare the number 
of blocks mined by the attacker against the difference of the number of blocks $\Delta$ mined by each subgraph.
We know from the analysis that it is sufficient for the attacker to mine at least $\Delta + 1$ blocks in order to be 
able to discard one of the $k$ blockchain views, allowing for double spending. 
The experiment is similar to the previous experiment in that we also used Emulab with the same ns/2 topology, however, we did not introduce delays and averaged results over 10 runs of 4 minutes each.

Figure~\ref{fig:deltas} depicts the minimum, maximum and average blocks obtained over the 10 runs. The vertical bars indicate minimum and maximum. First, we can observe that the average difference $\Delta$ is usually close to its minimum value observed during the 10 runs. This is due to having a similar total number of blocks mined by each subgraph in most cases with few rare cases where the difference is larger.
As we can see, the total number of blocks (including uncles) mined during the experiment by the attacker is way larger than the difference in blocks $\Delta$ mined by the two subgraphs. This explains the success of the Balance attack
as was observed in Section~\ref{ssec:selection}.






\begin{figure}
\begin{center}
\includegraphics[width=0.4\textwidth]{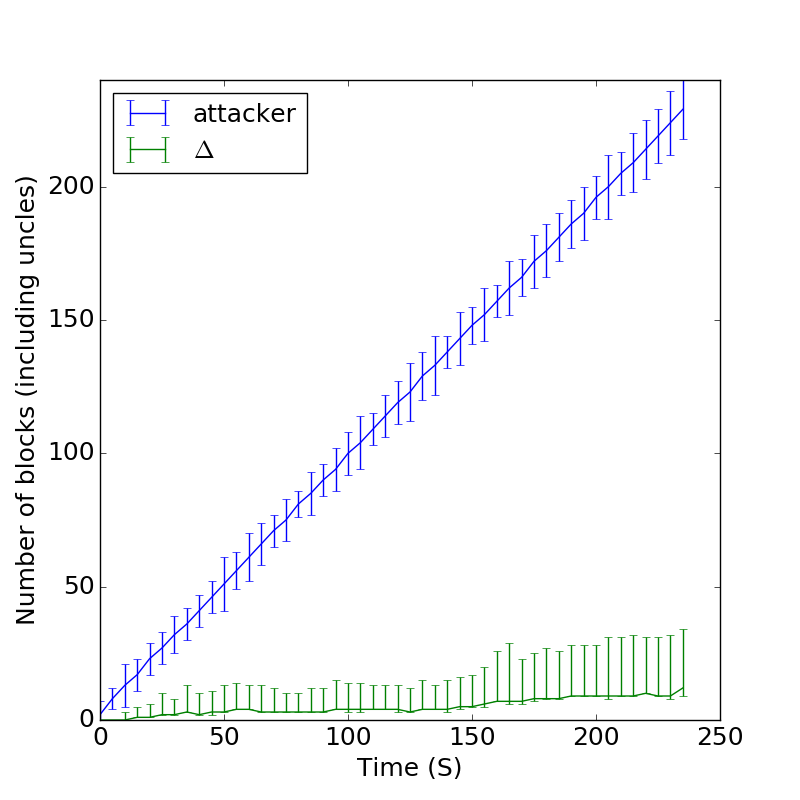}
\caption{The total number of blocks (including uncles) mined by the attacker and the difference $\Delta$ in the total number of blocks (including uncles) mined by the two subgraphs $G_1$ and $G_2$ (error bars indicate minimum and maximum observed over 10 runs)\label{fig:deltas}}
\end{center}
\end{figure}

\begin{figure}
\begin{center}
\includegraphics[width=0.4\textwidth]{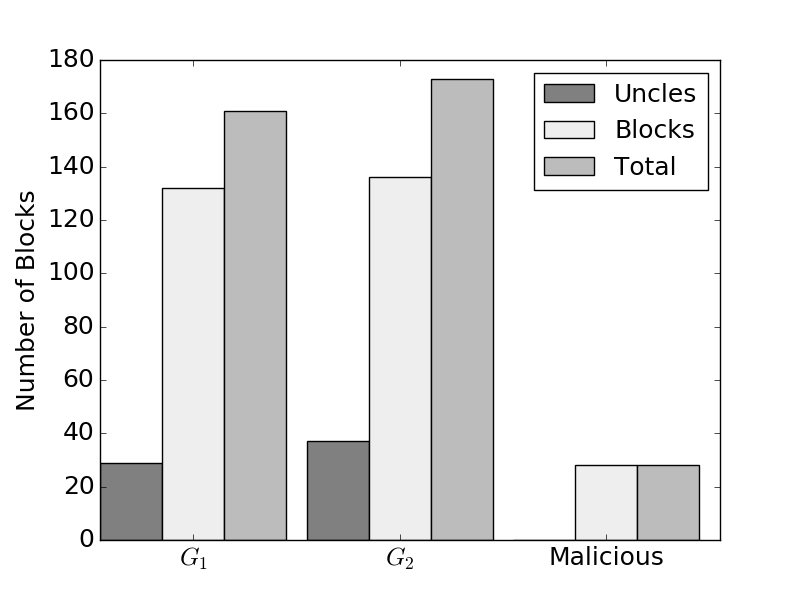}
\caption{The total number of blocks (including uncles and non-uncles) 
mined by the attacker and the two subgraphs $G_1$ and $G_2$\label{fig:bars}}
\end{center}
\end{figure}

\begin{figure}
\begin{center}
\includegraphics[width=0.4\textwidth]{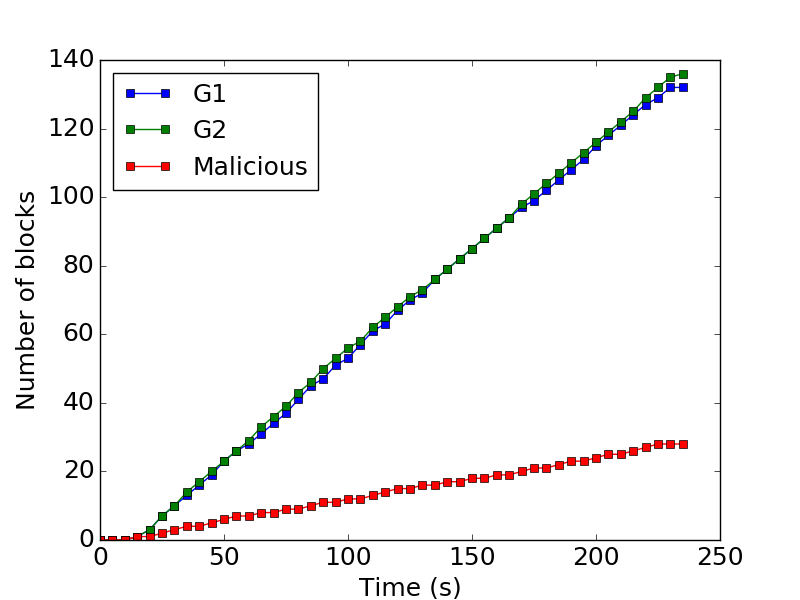}
	\caption{The depth of the blockchains mined by the attacker and two subgraphs $G_1$ and $G_2$\label{fig:sevensevenonlyblocks}}
\end{center}
\end{figure}

%
%

\subsection{The role of uncle blocks in Ethereum}

In the previous experiment, we focused on the total number of blocks without differentiating the blocks that 
are adopted in the main branch and the uncle blocks that are only part of the local blockchain views. The {\sc Ghost} protocol accounts for these uncle blocks to decide the current state of the blockchain as we explained previously in Section~\ref{sec:prel}.

Figure~\ref{fig:bars} indicates the number of uncle blocks in comparison to the blocks accepted on the current state of the blockchain for subgraphs $G_1$ and $G_2$, and the attacker (referred to as `Malicious').
As expected, we can observe that the malicious node does not produce any uncle block because he mines the block in isolation of the rest of the network, successfully appending each mined block consecutively  
to the latest block of its current blockchain view. We note several uncle blocks in the subgraphs, as correct miners may mine blocks concurrently at the same indices.

Figure~\ref{fig:sevensevenonlyblocks} depicts the creation of the number of mined blocks (excluding uncle blocks) over time for subgraphs $G_1$ and $G_2$, and the attacker (referred to as `Malicious'). As we can see the difference between the number of blocks mined on the subgraphs is significantly smaller than the number 
of blocks mined by the attacker. This explains why the Balance attack was observed in this setting.

\subsection{Relating connectivity to known blocks}
Figure~\ref{fig:seveneighttwin} illustrates the execution of two subgraphs resolving connectivity issues and adopting a chain. This experiment outlines one of the fundamental aspects of the balance attack, in which the chosen subgraph resolves the network delay and attempts reconnection with another subgraph. At this point, the subgraphs will initiate the consensus protocol and select the branch to adopt as the main branch. The experiment was set up with two subgraphs $G_1$ and $G_2$ where $\left | V_1\right | = \left | V_2 \right | = 7$. The attacker selects a subgraph and delays messages between this subgraph and another, enforcing an isolated mining environment. Once the delay is set, the attacker joins one of the subgraphs and begins to mine onto the current chain. The attacker then delays the messages until there is a sufficient amount of blocks mined onto the isolated blockchain for it 
to be adopted as the correct chain by the other subgraph. In this experiment, at $t=60\,s$, the delay between subgraphs is resolved, and the subgraphs maintain a connection. Upon reconnection, the subgraphs invoke the consensus protocol to select and adopt the correct chain. In this case, using the {\sc Ghost} protocol, the heaviest chain is selected for both subgraphs, meaning the chain mined by $G_1$ is chosen, to which 
the attacker contributed.

This result reveals that the adoption of a chosen blockchain is plausible, given that the attacker is able to sufficiently delay messages between subgraphs. 

\begin{figure}
\begin{center}
\includegraphics[width=0.4\textwidth]{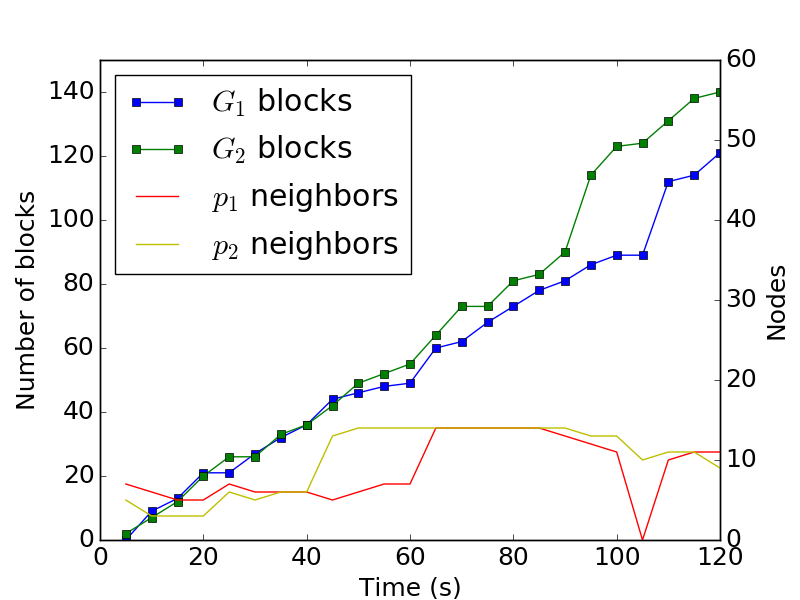}
	\caption{Execution where the attacker in $G_1$ delays the communication between $G_1$ and $G_2$ for one minute\label{fig:seveneighttwin}}
\end{center}
\end{figure}

\remove{
\section{Avoiding the Blockchain Anomaly}\label{sec:solutions}

\vincent{Atomic broadcast? cf. SRDS reviews}

There are various ways to prevent the blockchain anomaly from occurring.
First, one can flag a transaction as dependent and indicate the transaction that should be executed together.
The miners observing this flag should either decide to include the two dependent transactions or none of them.
Second, one could use smart contract and encode on-chain a transfer that depends on the current state of 
the blockchain.

\subsection{Smart contracts}\label{ssec:sc}

Smart contracts are a foundational aspect of the \emph{Ethereum} system, as they are distributed code execution based on conditional aspects. The contracts can be programmed to allow for certain conditions to be met in order for the code to be executed. 
What we found was that the anomaly prevention depended entirely on the programming of the smart contract. This means that if a smart contract was coded so that it did not properly check the condition that the first transaction had occurred, it would execute as normal, acting like a normal transaction and suffering from the anomaly. 

\subsection{On-chain vs. off-chain computation}
In Figure~\ref{fig:sc2}, we illustrate the writing of a smart contract in the Solidity programming language with which we could not observe the anomaly. The key point is that 
the \texttt{sendIfReceived} function groups two steps: the check that the amount has been paid at line 22 and the payment that 
results from this successful check at lines 23 and 24. Because these two steps are executed on-chain, we know that one has to 
be necessarily true for the second to occur. 

However, if the two steps were parts of two separate functions of the contract, one checking 
that the amount had been paid and another that would do the payment and be invoked upon the returned value of the former then the 
anomaly could arise. For example, consider Figure~\ref{fig:sc3} where one function, \texttt{checkPayment}, checks that the payment from Alice proceeded correctly (lines 3--7) and the other function, \texttt{sendIfReceived}, is modified to execute the payment unconditionally (lines 9--13). Even if Bob invokes \texttt{checkPayment} and observes that it returns successfully before invoking \texttt{sendIfReceived} the anomaly may arise. The reason is that the check is made off-chain and nothing guarantees that the payment from Alice was not reorganized while Bob was checking the result off-line.

To conclude, it looks like the former contract in Figure~\ref{fig:sc2} has higher chances of not suffering from the Blockchain anomaly than the smart contract of Figure~\ref{fig:sc3} as it executes the check and the conditional transfer on-chain, however, this does not guarantee that the smart contract of Figure~\ref{fig:sc2} is immune to the blockchain anomaly. Further investigation is needed to prove it formally. In addition and just like transactions, smart contracts must be included in a block that gets mined and appended to the blockchain. Its inclusion into the blockchain even with $k$ subsequent blocks may suffer from a reordering as well, and lead to other kind of anomalies.

\subsection{Multisignatures and the case of Bitcoin}

Even without the Turing-complete scripting language, there may be ways in Bitcoin to bypass the Blockchain anomaly. 
The idea is to change a conditional transaction into a joint payment that includes both the conditional transaction and the action enabling its condition. The idea of including the transaction and the action is similar to the idea of grouping in the same contract function \texttt{SendIfReceived} of Figure~\ref{fig:sc2}, the check and the transfer that we described before.

The joint payment will represent the payment of Carole by both Alice and Bob. The payment will thus take two inputs, owned by different people, and give one output. Because the coins of these two inputs come from different addresses, the joint payment needs two different signatures. The joint payment can be achieved with a \texttt{multisig} transaction in Bitcoin so that the \texttt{multisig} transaction requires either two signatures from Alice, Bob and an arbitrator, Donald, in order to execute. If both Alice and Bob sign the transaction, then it executes and Carole gets paid. However, if Alice or Bob refuses to sign, then Donald can help resolving the transaction by signing. It is important to note that the semantic of the joint payment differs from the conditional transaction though: Bob cannot wait until he gets the money from Alice to choose what to do, whether to pay Carole.

} 


\section{Application to Nakamoto's Consensus Protocol}\label{sec:disc}

Although the {\sc Ghost} protocol was shown to be less vulnerable to message delays than Nakamoto's consensus protocol~\cite{SZ15}, it is interesting to note that our analysis can be extended to Nakamoto's consensus protocol.
%
As described in Section~\ref{sec:prel}, Nakamoto's consensus protocol differs from the {\sc Ghost} protocol in the way a node chooses the current state of the blockchain in case of forks.
We present below an intuition of the proof that the Bitcoin is vulnerable to the Balance attack.

As noted previously~\cite{SZ15,PSS16} Nakamoto's protocol suffers from delays for the following reasons.
Let us assume that $p$ pools of miners mine $p$ concurrent blocks at the same time in Bitcoin. Assuming that 
all these mining pools share the same view of the blockchain with only a genesis block when this occurs, then all 
mining pools update their 
local blockchain view with a new block at index 1 that differs from one pool to another.
Finally, all miners exchange their blockchain view and selects one of these views as the current state of the blockchain.
Consider that the process repeats for the $f^{th}$ time, where miners mine concurrently, now at index $f$, and exchange the blockchain view. 
At each iteration of the process, the chain depth increases by 1 while the number of blocks mined effectively by the correct miners is $p$ leading to a final blockchain of depth $f$ while the number of mined blocks is $fp$.
Intuitively, this means that for an external attacker to be able to make the system discard a particular chain, the attacker simply needs to have slightly more than a $p^{th}$ 
of the total mining power or the other miners to have a chance to mine a longer chain than the other miners.

\begin{theorem}\label{thm:block-obliviousness}
A blockchain system that selects a main branch based on Nakamoto's protocol (Alg.~\ref{alg:nakamoto}) is block oblivious.
\end{theorem}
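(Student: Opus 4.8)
The plan is to mirror the argument already carried out for {\sc Ghost} (Theorem~\ref{thm:block-obliviousness} and Lemma~\ref{lem:delta}), but replacing the ``heaviest subtree'' accounting by ``deepest chain'' accounting. The key observation to make precise is the one sketched informally just above the statement: when communication between the two subgraphs $G_1$ and $G_2$ is cut for time $\tau$, the relevant quantity the attacker must beat is no longer the difference $\Delta = |X_1 - X_2|$ of total blocks mined, but the difference in the \emph{depths} of the two local main chains. First I would argue that, since the attacker in Alg.~\ref{alg:paa} mines in isolation on top of $b_2$ (line~\ref{line:mineb2}), the attacker's chain grows by exactly the number of blocks it mines, so the attacker produces a chain extension of expected length $\mu_m = \rho t\tau / d$ with no wasted (uncle) blocks. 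For a correct subgraph the chain depth is at most the number of blocks it mines, so the depth difference between the two correct chains is at most $\Delta = |X_1 - X_2|$.

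Next I would reuse Lemma~\ref{lem:delta} verbatim: with the delay $\tau \geq \frac{(1-\rho)6d\log(4/\varepsilon)}{4\rho^2 t}$ fixed at line~\ref{line:tau}, we have $\Pr[\Delta < \mu_m] > 1-\varepsilon$. Hence with probability $1-\varepsilon$ the attacker's privately mined extension, once appended to $G_j$'s chain and released, makes the chain seen by $G_j$ strictly deeper than the chain seen by $G_i$ (the transaction subgroup). By lines~\ref{line:btc-pruning-start}--\ref{line:btc-depth-end} of Alg.~\ref{alg:nakamoto}, when the two views are reconciled every correct node runs \texttt{get-main-branch} and selects the deepest branch, so all nodes adopt the chain to which the attacker contributed and discard $G_i$'s chain containing $\ms{tx}$. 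Since the recipient in $G_i$ had already observed $\ms{tx}$ committed (Property (1)--(2) of the commit definition held locally in $G_i$ during the delay, provided $\tau$ is large enough to also accumulate the required $m$ confirmation blocks — the same as in the {\sc Ghost} case), both conditions of Definition~\ref{def:obliviousness} are met with probability $1-\varepsilon$, and the attacker reissues $\ms{tx}'$ at line~\ref{line:doublespend}.

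The main obstacle — and the place where the Nakamoto case genuinely differs from {\sc Ghost} — is justifying that the depth difference is bounded by $\Delta$ rather than something larger. For {\sc Ghost} the weight of a subtree is literally the count $X_i$, so Lemma~\ref{lem:delta} applies directly; for Nakamoto, a subgraph that internally forks wastes blocks, so its chain depth is \emph{strictly less} than $X_i$, which only helps the attacker (it shrinks the target the attacker must overtake). Thus the inequality ``depth difference $\le \Delta$'' holds a fortiori, and no new concentration bound is needed — one only has to note that $\mu_m$ bounds the attacker's actual depth gain with no loss, whereas $\Delta$ over-bounds the correct subgraphs' depth gap. A secondary subtlety is that one must ensure the attacker releases its chain to the \emph{same} subgraph $G_j$ whose correct chain happens to be the shorter of the two at disconnection time (so that attacker$+G_j$ beats $G_i$); this is exactly the choice of $b_2$ at line~\ref{line:b2} and requires that $\Delta < \mu_m$ with the attacker free to pick the favorable side after observing both chains, which Lemma~\ref{lem:delta} supports since it bounds $|X_1-X_2|$ symmetrically. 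I would conclude by remarking that, as the excerpt anticipates, this adaptation forces the attacker to mine on top of a correct chain rather than solo-mining a standalone subchain, but yields the same block-obliviousness conclusion.
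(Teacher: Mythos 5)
Your overall plan (attacker mines on top of $G_j$'s chain, reconciliation selects the deepest branch, reuse the concentration argument) matches the paper's adaptation, but the central quantitative step is flawed. You claim that because each correct subgraph's chain depth $D_i$ satisfies $D_i \le X_i$, the depth gap satisfies $|D_1 - D_2| \le |X_1 - X_2| = \Delta$, so that Lemma~\ref{lem:delta} can be reused verbatim. That implication does not hold: $D_1 \le X_1$ and $D_2 \le X_2$ give no bound on $|D_1-D_2|$ in terms of $|X_1-X_2|$ (for instance $X_1 = X_2$ with $G_1$ forking internally and $G_2$ not forking gives $\Delta = 0$ but a large depth gap). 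Your ``a fortiori'' patch --- that internal forking only shrinks the target the attacker must overtake --- is wrong in exactly the case that matters: under Nakamoto's rule the attacker's released chain has length $D_j$ plus the attacker's own blocks, so wasted blocks inside $G_j$, the subgraph the attacker contributes to, reduce $D_j$ and \emph{enlarge} the gap $D_i - D_j$ the attacker must close; forking helps only when it occurs in $G_i$. Since internal forking is precisely what distinguishes Nakamoto's depth-based selection from {\sc Ghost}'s block-count-based selection, it cannot be dismissed by a pointwise domination argument.

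The paper closes this gap by re-modeling rather than by domination: it assumes each correct miner mines at the top of its subgroup's longest branch with the same probability $\pi$, introduces new binomial random variables $X'_i, X'_j$ counting the blocks added to the \emph{main branch} of each subgraph, with mean $\mu^{bitcoin}_c = \frac{(1-\rho)t\tau}{2d\pi}$, and then re-applies the reasoning of the proof of Lemma~\ref{lem:delta} to $X'_i, X'_j$ instead of the raw block counts $X_1, X_2$. In other words, the Chernoff-based argument must be run on the depth growth of the main branches (under an explicit modeling assumption), not inherited from the bound on total blocks mined. The remaining ingredients of your write-up --- mining at the top of $G_j$'s view, the commit of $\ms{tx}$ in $G_i$ during the delay, and the double spend at line~\ref{line:doublespend} --- are consistent with the paper's sketch, but the concentration step needs to be replaced as above for the proof to go through.
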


To show that the same result holds for Bitcoin, we need to slightly change the Balance attack.
While in Ethereum it was sufficient for the attacker to mine on any branch of the blockchain view of 
$G_j$ after 
the block $b_2$ (Alg.~\ref{alg:paa}, line~\ref{line:b2}), in Bitcoin the attacker has to mine 
at the top of the blockchain view of $G_j$. By doing so, the attacker increases the length of the 
Nakamoto's main branch in graph $G_j$. 
Considering that each correct miner mines at the top of the longest branch of their 
subgroup with the same probability $\pi$, 
the mean of the number of blocks added to the main chain will become 
$\mu^{bitcoin}_c = \frac{(1-\rho)t\tau}{2d\pi}$.
We can then define two binomial random variables $X'_i$ and $X'_j$ for the 
expected number of blocks in the main branch of $G_i$ and $G_j$, respectively, and apply the 
reasoning of the proof of Lemma~\ref{lem:delta}. 

\remove{

\vincent{Rosenfeld's analsysis does not handle the general form of attack. Attacker is assumed to run forever.
Finney's attack requires the merchant to wait for one confirmation only. vector76 attack seems similar to BA.}
An interesting aspect of Nakamoto's consensus is that if the system is large enough and the mining power is sufficiently scattered among enough mining pools, then the probability of having a mining pool mining faster than the other can be made arbitrarily small. For this reason, the Blockchain anomaly has a very low chance of occurring in realistic executions of a large-scale permissionless blockchain systems like Bitcoin or Bitcoin-NG~\cite{EGSvR16}. Recent work has shown, however, that incentives exist for miners to not disclose the block they successfully mine in order to waste the mining efforts of others, making it possible for them to mine a longer chain of blocks than others. This could potentially lead in turn to the Blockchain anomaly~\cite{EG14}. 

The 51-percent attack, where an attacker who controls more than half of the mining power 
of the public network can mine blocks faster than others, could lead to the Blockchain anomaly in a public blockchain system.
The attacker can issue a transaction to convert some bitcoins to withdraw some money. 
Once the transaction is mined into a block at 
index $i$, then the attacker can fork the blockchain from index $i-1$, hence excluding his transaction, with a new series of blocks that 
gets eventually longer than the main chain. As the longest branch gets adopted, the attacker's transaction does not appear in the chain so that, in the end, the attacker withdrew some money while keeping his coins.
With the same technique, one could easily override the block containing the transaction from Alice to Bob.
%
Note, however, that the 51-percent attack is not necessary to obtain the blockchain anomaly. In particular, any blockchain system that adopts 
Nakamoto's longest-chain rule, could suffer from the anomaly even if its attacker has only a minority of the mining power~\cite{SZ15}.



One may think that the blockchain anomaly is specific to proof-of-work as there exist blockchain systems not based on proof-of-work that would not suffer from this issue because they 
trade availability for consistency (as discussed in Section~\ref{sec:rw}). This is the case of some proof-of-stake blockchain systems, like Tendermint, that guarantees agreement and validity of consensus deterministically.
The blockchain anomaly however could even apply to blockchain systems based on proof-of-stake. For example, Casper is a proof-of-stake alternative to the {\sc Ghost} reorganistion protocol used in Ehtereum. 
It looks like proof-of-stake does not necessarily solve the problem, as even Casper favours availability over consistency.\footnote{\url{http://ethereum.stackexchange.com/questions/332/what-is-the-difference-between-casper-and-tendermint/536}.}

%
%

Another problem raised by Gavin Wood, one of the founder of Ethereum, indicates that reorganization can impact the initial order of transactions. This matters in an execution where two transactions aim at transfering \$100 from the same account whose initial balance is only \$100 because only the transaction that is committed first can be executed.\footnote{\url{https://blog.ethereum.org/2015/08/08/chain-reorganisation-depth-expectations/}.} The Blockchain anomaly is more general than this problem, in particular 
the Blockchain anomaly allows conflicting transactions to be successfully executed and committed in two 
different states of the blockchain.
Because the blockchain anomaly is more general, solving the blockchain anomaly would also solve this problem.

As it is known to be impossible to solve consensus in an asynchronous system in the presence of failures, researchers generally consider that a protocol ensures termination or agreement deterministically but not both.  In this paper, we considered that the blockchain consensus terminates deterministically based on the recommended 6 to 12 mined blocks of Bitcoin~\cite{Nak08} and Ethereum~\cite{Woo15} but sometimes failing at ensuring agreement.
Note that other formalizations also consider that termination of Nakamoto's consensus is deterministic and that only its safety property is probabilistic~\cite{GKL15}, just like we did.
One may argue however that termination is not guaranteed deterministically but rather probabilistically and that one can increase the probability of consensus agreement by simply delaying the termination; the characterisation of Nakamoto's consensus in Bitcoin-NG adopts this definition~\cite{EGSvR16}. In practice, however, blockchain applications assume consensus termination to provide a responsive service, 
as explained in Section~\ref{ssec:termination}.
For example, Vitalik Buterin, one of the founder of Ethereum, explained that waiting for 12 mined blocks is probably sufficient for the first block to be irreversible.\footnote{\url{http://ethereum.stackexchange.com/questions/183/how-should-i-handle-blockchain-forks-in-my-dapp/203\#203}.}
This can be true in large-scale permissionless system where the mining power is sufficiently scattered among mining pools, but as the Blockchain anomaly shows, it is easy to revert it in a private chain context.


}

\section{Related work}\label{sec:rw}

Traditional attacks against Bitcoin consist of waiting for some external action, like 
shipping goods, in response to a transaction before discarding the transaction from the main branch. 
As the transaction is revoked, the issuer of the transaction can reuse the coins of the 
transaction in another transaction. As the side effects of the external action cannot 
be revoked, the second transaction appears as a ``double spending''.

Perhaps the most basic form of such an attack assumes that an application takes 
an external action as soon as a transaction is included in a block~\cite{Fin11,KAC12,BDEWW13}.
The first attack of this kind is called Finney's attack and consists of solo-mining a block with a 
transaction that sends coins to itself without broadcasting it before issuing a transaction that double-spends the same coin to a merchant. When the goods are delivered in exchange of the coins, the attacker broadcasts its block to override the payment of the merchant. 
The vector76 attack~\cite{Vec11} consists of an attacker solo-mining after block $b_0$ a new block 
$b_1$ containing a transaction to a merchant to purchase goods. 
Once another block $b_1'$ is mined after $b_0$, the attacker quickly sends $b_1$ to the merchant for
an external action to be taken. If $b_1'$ is accepted by the system, the attacker can issue another transaction with the coins spent in the discarded block $b_1$.

The attacks become harder if the external action is taken after the transaction is committed
by the blockchain. Rosenfeld's attack~\cite{Ros12} consists of issuing a transaction to a merchant. 
The attacker then starts solo-mining a longer branch while waiting for $m$ blocks to be appended 
so that the merchant takes an external action in response to the commit.  
The attack success probability depends on the number $m$ of blocks the merchant waits before taking an external action and the attacker mining power. However, when the attacker has more mining power
than the rest of the system, the attack, also called \emph{majority hashrate attack} or \emph{51-percent attack}, is guaranteed successful, regardless of the value $m$. 
To make the attack successful when the attacker owns only a quarter of the mining power, the attacker can incentivize other miners to form a coalition~\cite{EG14} until the coalition owns more than half of the total mining power.

Without a quarter of the mining power, discarding a committed transaction in Bitcoin 
requires additional power, like the control over the network.
It is well known that delaying network messages can impact Bitcoin~\cite{DW13,PSS16,SZ15,GKK16,NKMS16}.
Decker and Wattenhoffer already observed that Bitcoin suffered 
from block propagation delays~\cite{DW13}. 
Godel et al.~\cite{GKK16} analyzed the effect of propagation delays on Bitcoin using a Markov 
process. 
Garay et al.~\cite{GKL15} investigated Bitcoin in the synchronous communication setting, however, 
this setting is often considered too restrictive~\cite{Cac01}.
Pass et al. extended the analysis for when the bound on message delivery is unknown
and showed in their model that the difficulty of Bitcoin's crypto-difficulty has to be adapted depending on the bound on the communication delays~\cite{PSS16}.
This series of work reveal an important limitation of Bitcoin: delaying propagation of blocks can 
waste the computational effort of correct nodes by letting them mine blocks unnecessarily 
at the same index of the chain. In this case, the attacker does not need more mining power 
than the correct miners, but simply needs to expand its local blockchain faster than the growth of the longest branch of the correct blockchain.

Ethereum proposed the {\sc Ghost} protocol to cope with this issue~\cite{SZ15}. 
The idea is simply to account for the blocks proposed by correct miners in the multiple 
branches of the correct blockchain to select the main branch. As a result, growing a branch the fastest is not sufficient for an attacker of Ethereum to be able to double spend.
Even though the propagation strategy of Ethereum differs from the pull
strategy of Bitcoin, some network attacks against Bitcoin could affect
Ethereum. 
In the Eclipse attack~\cite{HKZG15} the attacker forces the victim to connect to 8 of its malicious 
identities.
The Ethereum adaptation would require to forge $3\times$ more identities and force as many 
connections as the default number of clients is 25.
%
Apostolaki et al.~\cite{AZV16} proposed a BGP hijacking attack and showed that
the number of Internet prefixes that need to be 
hijacked for the attack to succeed depends on the distribution of the mining power. 
BGP-hijacking typically requires the control of network operators but is independent from Bitcoin 
and could potentially be exploited to delay network messages and execute a Balance attack 
in Ethereum.

The R3 consortium has been experimenting Ethereum since more than half a year now and our 
discussion with the R3 consortium indicated that they did not investigate the dependability of the 
{\sc Ghost} consensus protocol and that they also worked with Ripple, Axoni, Symbiont. Some work already evoked the danger 
of using proof-of-work techniques in a consortium context~\cite{Gra16}. 
In particular, experiments demontrated the impossibility of ordering 
even committed transactions in an Ethereum private chain without exploring the impact of the network delay~\cite{NG16}.
As a private blockchain involves typically a known and smaller number of participants 
than a public blockchain, it is also 
well-known~\cite{CDE16,Vuc16} that many 
Byzantine Fault Tolerance (BFT) solutions~\cite{CL02,PP11,LNZ16,CSV16} could be used instead.
At the time of writing, R3 has just released Corda~\cite{BCG16} as a proposed solution for 
private chains.
Corda does not yet recommend a particular consensus protocol but mentions BFT and favors modularity by allowing to plug any consensus protocol instead~\cite{BCG16}. 
Our work confirms that proof-of-work, besides being unnecessary 
for consortium private chain when the set of participants is known, is not recommended especially 
for dependability reasons.

\section{Conclusion}\label{sec:conclusion}

In this paper, we show how main proof-of-work blockchain protocols can be badly suited for consortium blockchains.
To this end, we propose the Balance attack a new attack that combines mining power with 
communication delay to affect prominent proof-of-work blockchain protocols like 
Ethereum and Bitcoin. 
This attack simply consists of convincing correct nodes 
to disregard particular proposed series of blocks to lead to a double spending.
We analyzed the tradeoff inherent to Ethereum between communication delay and mining power,
hence complementing previous observations made on Bitcoin. 

There are several ways to extend this work. First, the context is highly dependent on
medium-scale settings where statistics about all participants can be easily collected. 
It would be interesting to extend these results when the mining power of participants 
is unknown. Second, the success of the Balance attack despite a low mining power requires
communication delay between communication subgraphs. The next step is to 
compare denial-of-service and man-in-the-middle attacks and evaluate their effectiveness in introducing this delay.

\vincent{Man in the middle attack.}

\vincent{Talk about the IBM fabric and Marko and Christian tech rep.}

%
%

\subsection*{Acknowledgements.}

We are grateful to the R3 consortium for sharing information regarding their Ethereum testbed and to Seth Gilbert and Tim Swanson for comments on an earlier version of this paper.

\bibliographystyle{IEEEtranS}
\bibliography{bib}

\appendix

\subsection{Analysis for the general case}\label{app:proof}

As we show below, if the attacker can delay communications for long enough between $k$ subgraphs of the communication graph where $k \geq 2$ then the attacker does not need a large portion $\rho$ of the mining power for the blockchain system to be block oblivious. 
We consider an attacker that can delay messages of the communication graph $G$ between $k$ subgraphs $G_1, ..., G_k$, each graph owning the same portion $\frac{1-\rho}{k}$ of the total mining power.

Similarly to Section~\ref{sec:proof},
we consider that for $0 < i \leq k$, each of the 
subgraph $G_i$ mine blocks during delay $\tau$.
During that time, each $G_i$ performs a series of $n = \frac{1-\rho}{k}t\tau$ independent and identically distributed Bernoulli trials that returns one in case of success with probability 
$p = \frac{1}{d}$ and 0 otherwise. 
Let the sum of these outcomes be the random variable $X_i$ ($0<i\leq k$) with a binomial distribution and mean: 
\begin{equation}
\mu_c = np = \frac{(1-\rho)t\tau}{kd}. \label{eq:mean}
\end{equation}

Similarly, the mean of the number of blocks mined by the malicious miner during time $\tau$ is
\begin{equation}
\mu_m = \frac{\rho t\tau}{d}. \label{eq:meanm}
\end{equation}

We are interested in measuring the probability that an attacker can select a candidate blockchain among 
the existing ones and make it adopted by the system. Note that this is easier than rewriting the 
history with the attacker personal blocks but is sufficient to double spend by simply making sure the initial transaction that spend the coins is only part of blockchains candidate that will be discarded.

The attack relies on minimizing the difference in mined blocks between any pair of communication subgraphs. 
Hence, let us denote $\Delta$ the difference of the number of blocks mined on any two subgraphs:
$$\Delta = \max_{\forall 0<i,j\leq k}(|X_i - X_j|).$$

To this end, we bound the difference between the number of blocks mined in two subgraphs so that  
the attacker can mine more than the difference. 

\begin{fact}[Bernoulli's inequality]
$1+nt \leq \left(1+t\right)^n$ for $n\geq 1$ and $t\geq -1$. 
\end{fact}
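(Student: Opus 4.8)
The plan is to prove Bernoulli's inequality by induction on the integer $n$, which is the only regime in which the paper applies it (the exponent $n$ counts independent Bernoulli trials or communication subgraphs, hence is a positive integer). The base case $n = 1$ is immediate: $1 + 1\cdot t = (1+t)^1$, so the inequality holds with equality.

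For the inductive step I would assume $1 + nt \leq (1+t)^n$ for some $n \geq 1$ and aim to show $1 + (n+1)t \leq (1+t)^{n+1}$. The key structural point is that the hypothesis $t \geq -1$ gives $1 + t \geq 0$; this nonnegativity is exactly what allows multiplying both sides of the induction hypothesis by $1+t$ while preserving the direction of the inequality, yielding $(1+t)^{n+1} = (1+t)^n(1+t) \geq (1+nt)(1+t)$. Expanding the right-hand side gives $(1+nt)(1+t) = 1 + (n+1)t + nt^2$, and discarding the nonnegative term $nt^2$ gives $(1+t)^{n+1} \geq 1 + (n+1)t$, which closes the induction.

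The only place a careful reader might push back is the use of the constraint $t \geq -1$: it is needed precisely to keep $1+t$ nonnegative so that the multiplication step is monotone, and the statement genuinely fails without it (for instance $t=-2$, $n=2$ gives $1+nt = -3$ but $(1+t)^2 = 1$, fine, yet $t=-2$, $n=3$ gives $1+nt=-5 \leq (1+t)^3 = -1$ — the more relevant failure is that the inductive argument itself breaks because multiplying by a negative $1+t$ flips the inequality). I do not expect any real obstacle here; the argument is a standard two-line induction. If one ever needed the inequality for real exponents $n \geq 1$, the induction would no longer apply and I would instead argue via convexity of $x \mapsto x^n$ on $[0,\infty)$ — equivalently, noting that $f(t) = (1+t)^n - 1 - nt$ has $f(0)=0$, $f'(t) = n((1+t)^{n-1}-1)$, and $f' \leq 0$ on $[-1,0]$ while $f' \geq 0$ on $[0,\infty)$, so $t=0$ is the global minimum — but for the purposes of this paper the integer case suffices.
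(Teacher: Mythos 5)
Your induction proof is correct and complete: the base case, the use of $1+t\geq 0$ to multiply the induction hypothesis without flipping the inequality, and the discarding of the $nt^2$ term are exactly the standard argument. The paper itself states this as a Fact without proof (it is invoked only to pass from $(1-2e^{-\frac{\delta^2}{3}\mu_c})^k$ to $1-2ke^{-\frac{\delta^2}{3}\mu_c}$, i.e.\ with integer exponent and $t\geq -1$), so there is no authorial proof to compare against; your integer-exponent induction is precisely the regime the paper needs, and your remark about the real-exponent extension via monotonicity of $f(t)=(1+t)^n-1-nt$ is a fine, if unnecessary, bonus. One small aside: the numerical examples you give for $t<-1$ (e.g.\ $t=-2$, $n=2,3$) do not actually violate the inequality — a genuine failure would be, say, $n=3$, $t=-4$, where $1+nt=-11$ but $(1+t)^3=-27$ — though your main point stands that the hypothesis $t\geq -1$ is what keeps the multiplication step monotone.
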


\begin{theorem}\label{thm:highproba}
An attacker decomposing the communication graph into $k$ subgraphs upper-bounds their difference 
$\Delta$ in mined blocks by 
$2\delta \mu_c$
with probability $\Pr[\Delta < 2\delta \mu_c] > 1-2ke^{-\frac{\delta^2}{3}\mu_c}$ where $\delta>0$.
\end{theorem}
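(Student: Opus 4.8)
The plan is to generalize the argument of Lemma~\ref{lem:delta} from two subgraphs to $k$ subgraphs, replacing the ``square-and-apply-Bernoulli'' trick by a straightforward union bound. First I would recall that each $X_i$ ($0 < i \leq k$) is a binomial random variable with mean $\mu_c$ (Eq.~\ref{eq:mean}), obtained from $n = \frac{1-\rho}{k}t\tau$ i.i.d.\ Bernoulli$(1/d)$ trials, so the standard Chernoff bounds~\cite{MR95} apply to each of them individually: for $0 < \delta < 1$,
\begin{equation}
\Pr\left[X_i \geq (1+\delta)\mu_c\right] \leq e^{-\frac{\delta^2}{3}\mu_c}, \qquad \Pr\left[X_i \leq (1-\delta)\mu_c\right] \leq e^{-\frac{\delta^2}{2}\mu_c}. \notag
\end{equation}
Since $\frac{\delta^2}{2} \geq \frac{\delta^2}{3}$ we have $e^{-\frac{\delta^2}{2}\mu_c} \leq e^{-\frac{\delta^2}{3}\mu_c}$, and adding the two tail bounds gives, for each fixed $i$,
\begin{equation}
\Pr\left[|X_i - \mu_c| \geq \delta \mu_c\right] \leq 2 e^{-\frac{\delta^2}{3}\mu_c}. \notag
\end{equation}

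Next I would apply the union bound over the $k$ subgraphs: the probability that \emph{some} $X_i$ deviates from $\mu_c$ by at least $\delta\mu_c$ is at most $2k e^{-\frac{\delta^2}{3}\mu_c}$. Equivalently, with probability at least $1 - 2k e^{-\frac{\delta^2}{3}\mu_c}$ every $X_i$ lies in the interval $(1-\delta)\mu_c < X_i < (1+\delta)\mu_c$ simultaneously. I would then observe that, conditioned on this event, for any pair $i,j$ the triangle inequality yields $|X_i - X_j| \leq |X_i - \mu_c| + |X_j - \mu_c| < 2\delta\mu_c$, so in particular $\Delta = \max_{0 < i,j \leq k}(|X_i - X_j|) < 2\delta\mu_c$. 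Taking complements gives exactly $\Pr[\Delta < 2\delta\mu_c] > 1 - 2k e^{-\frac{\delta^2}{3}\mu_c}$, as claimed. (Note this cleanly subsumes the $k=2$ bound of Lemma~\ref{lem:delta}, which used a weaker squaring argument; one could alternatively keep that style and invoke the stated Bernoulli's inequality fact, but the union bound is both tighter and shorter.)

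The argument is essentially routine once the setup is in place; the only point requiring a little care is the coarsening $e^{-\frac{\delta^2}{2}\mu_c} \leq e^{-\frac{\delta^2}{3}\mu_c}$ that lets both Chernoff tails be absorbed into a single $2e^{-\frac{\delta^2}{3}\mu_c}$ term, and then confirming that the union bound over $k$ events produces the factor $2k$ rather than, say, $k$ or $k^2$. I do not anticipate a genuine obstacle here; the subsequent work (relating $2\delta\mu_c$ to $\mu_m$ via $\delta = \frac{k\rho}{1-\rho}$ and plugging in the choice of $\tau$ from line~\ref{line:tau} of Alg.~\ref{alg:paa}) is what carries the argument on to block obliviousness in the general case, but that lies beyond this particular statement.
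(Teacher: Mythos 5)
Your proof is correct, and it reaches the stated bound by a slightly different route than the paper. The paper keeps the structure of Lemma~\ref{lem:delta}: it lower-bounds $\Pr[\Delta < 2\delta\mu_c]$ by the product $\prod_{i=1}^{k}\Pr\left[|X_i-\mu_c| < \delta\mu_c\right]$ (implicitly using the independence of the mining processes in the $k$ disjoint subgraphs), applies the same Chernoff bounds to get $\left(1-2e^{-\frac{\delta^2}{3}\mu_c}\right)^k$, and then linearizes this via the Bernoulli inequality stated as a Fact just before the theorem, obtaining $1-2ke^{-\frac{\delta^2}{3}\mu_c}$. You replace the product-plus-Bernoulli step by a union bound over the $k$ deviation events, and you make explicit the triangle-inequality step (all $X_i$ within $\delta\mu_c$ of $\mu_c$ forces every pairwise difference, hence $\Delta$, below $2\delta\mu_c$) that the paper leaves implicit in its first inequality. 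Both arguments land on the same constant $2k$, but yours buys a little extra robustness: it does not need the $X_i$ to be independent, and it sidesteps the hypothesis of Bernoulli's inequality, which formally requires $2e^{-\frac{\delta^2}{3}\mu_c}\leq 1$ (the paper only verifies $e^{-\frac{\delta^2}{3}\mu_c}\leq 1$; when the condition fails the bound $1-2ke^{-\frac{\delta^2}{3}\mu_c}$ is negative and the claim vacuous, so nothing breaks, but the union bound avoids the case distinction entirely). The small caveats in your write-up are shared with the paper: the Chernoff bounds are quoted for $0<\delta<1$ while the theorem allows any $\delta>0$, and the union bound strictly yields $\geq$ rather than the stated $>$.
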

\begin{proof}
This difference is less than $2\delta \mu_c$ with probability larger than the probability that all random variables are within a $\pm \delta$ multiplying factor from the mean, we have:
\begin{equation}
\Pr[\Delta < 2\delta \mu_c] \geq \Pi^{k}_{i=1} \Pr\left[|X_i - \mu_c| < \delta \mu_c\right].\label{eq:proba}
\end{equation}
The probability that the numbers of blocks mined by each subgraph are within a $\pm \delta$ factor from their mean, is bound for $0 < \delta < 1$ and $0 < i \leq k$ by Chernoff bounds~\cite{MR95}: 
\begin{equation}
\left\{\begin{array}{ll}
\Pr\left[X_i \geq (1+\delta)\mu_c\right] &\leq e^{-\frac{\delta^2}{3}\mu_c},\notag
\\
\Pr\left[X_i \leq (1-\delta)\mu_c\right]& \leq e^{-\frac{\delta^2}{2}\mu_c}.
\end{array}\right.
\end{equation}
Thus, we have 
\begin{eqnarray}
\Pr[|X_i -\mu_c| \geq \delta \mu_c] &\leq& 2e^{-\frac{\delta^2}{3}\mu_c}, \notag \\
\Pr[|X_i -\mu_c| < \delta \mu_c] &>& 1-2e^{-\frac{\delta^2}{3}\mu_c}. \notag
\end{eqnarray}
and with Eq.~\ref{eq:proba} we obtain: 

\begin{equation}
\Pr[\Delta < 2\delta \mu_c] > \left(1-2e^{-\frac{\delta^2}{3}\mu_c}\right)^k.\label{eq:power}
\end{equation}

As $\mu_c \geq 0$, we have that:
\begin{eqnarray}
{-\frac{\delta^2}{3}\mu_c} &\leq& 0 \notag \\
-e^{\frac{-\delta^2}{3}\mu_c} &\geq& -1 \notag
\end{eqnarray}
and as $k \geq 1$ we can apply the Bernoulli inequality to Eq.~\ref{eq:power}.

Hence, Eq.~\ref{eq:power} becomes: 
\begin{equation}
\Pr[\Delta < 2\delta \mu_c] > 1-2ke^{-\frac{\delta^2}{3}\mu_c}. \label{eq:result}
\end{equation}
\end{proof}

The next theorem bounds the number of blocks the attacker needs to mine to force the system to adopt 
the candidate blockchain of its choice.
\begin{theorem}
If the attacker delays the links of $E_0$ while the miners on each of the $k$ subgraphs mine for
$$\tau \geq \frac{3kd\log(\frac{2k}{\varepsilon})}{\delta^2(1-\rho)t}$$ seconds,
then with probability $1-\varepsilon$ the difference between the number of blocks mined on the two subgraphs is lower than $\frac{2\delta (1-\rho)t\tau}{kd}$. 
\end{theorem}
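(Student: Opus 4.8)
The plan is to obtain the statement as an immediate corollary of Theorem~\ref{thm:highproba} by substituting the hypothesized lower bound on $\tau$ into its tail estimate and simplifying. First I would observe that, by the definition $\mu_c = \frac{(1-\rho)t\tau}{kd}$, the threshold $\frac{2\delta(1-\rho)t\tau}{kd}$ appearing in the conclusion is exactly $2\delta\mu_c$; hence it suffices to prove $\Pr[\Delta < 2\delta\mu_c] > 1-\varepsilon$, where $\Delta = \max_{0<i,j\leq k}|X_i-X_j|$ as in the general-case setup.

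Next I would unfold the assumption $\tau \geq \frac{3kd\log(\frac{2k}{\varepsilon})}{\delta^2(1-\rho)t}$. Multiplying through by the positive factor $\frac{(1-\rho)t}{kd}$ gives $\mu_c \geq \frac{3\log(\frac{2k}{\varepsilon})}{\delta^2}$, equivalently $\frac{\delta^2}{3}\mu_c \geq \log\!\left(\frac{2k}{\varepsilon}\right)$. Negating and applying the increasing exponential then yields $e^{-\frac{\delta^2}{3}\mu_c} \leq \frac{\varepsilon}{2k}$, hence $2k\,e^{-\frac{\delta^2}{3}\mu_c} \leq \varepsilon$, and therefore $1 - 2k\,e^{-\frac{\delta^2}{3}\mu_c} \geq 1-\varepsilon$. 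This is the same chain of manipulations used to derive Eq.~\ref{eq:epsilon} in the $k=2$ analysis, now with the factor $2k$ in place of $4$ and the Chernoff exponent $\frac{\delta^2}{3}$ throughout.

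Finally I would combine the two estimates: Theorem~\ref{thm:highproba} gives $\Pr[\Delta < 2\delta\mu_c] > 1 - 2k\,e^{-\frac{\delta^2}{3}\mu_c}$, and the previous paragraph bounds the right-hand side below by $1-\varepsilon$, so $\Pr[\Delta < 2\delta\mu_c] > 1-\varepsilon$, which is the claim after substituting back $2\delta\mu_c = \frac{2\delta(1-\rho)t\tau}{kd}$. There is no substantive obstacle here — the argument is routine algebra once Theorem~\ref{thm:highproba} is in hand. The only care needed is bookkeeping the direction of the inequalities through the negation and the exponential, and checking that the regime $0<\delta<1$ required by the Chernoff bounds invoked inside Theorem~\ref{thm:highproba} is the one intended (in the applications $\delta\in(0,1)$, so this is fine).
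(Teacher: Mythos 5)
Your proposal is correct and follows essentially the same route as the paper: the paper's proof simply runs the same algebra in reverse, starting from the requirement $2ke^{-\frac{\delta^2}{3}\mu_c}\leq\varepsilon$ and solving for $\tau$ via $\mu_c=\frac{(1-\rho)t\tau}{kd}$, with the appeal to Theorem~\ref{thm:highproba} left implicit. Your forward derivation, including the explicit identification $2\delta\mu_c=\frac{2\delta(1-\rho)t\tau}{kd}$ and the combination with the tail bound of Theorem~\ref{thm:highproba}, matches the intended argument.
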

\begin{proof}
The proof relies on upper bounding 
$2ke^{-\frac{\delta^2}{3}\mu_c}$
by $\varepsilon$:
\begin{eqnarray}
2ke^{-\frac{\delta^2}{3}\mu_c} &\leq& \varepsilon, \notag \\
\mu_c &\geq& \frac{3\log(\frac{2k}{\varepsilon})}{\delta^2}. \notag
\end{eqnarray}
By replacing $\mu_c$ by the expression of Eq.~\ref{eq:mean}, we obtain:
\begin{eqnarray}
\tau &\geq& \frac{3kd\log(\frac{2k}{\varepsilon})}{\delta^2(1-\rho)t}. \notag
\end{eqnarray}
\end{proof}

Let us now bound the probability that the number of blocks mined on $k$ subgraphs is always lower than the expectation of the number of blocks mined by the malicious node.
\begin{theorem}
If $\delta = \frac{\mu_m -1}{2\mu_c}$ then $$\Pr[\Delta < \mu_m] > 1-2ke^{-\frac{\left(\mu_m-1\right)^2}{12\mu_c}}.$$
\end{theorem}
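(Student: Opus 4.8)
The plan is to derive this statement as an immediate specialization of Theorem~\ref{thm:highproba}, exactly mirroring the substitution trick already used in the proof of Lemma~\ref{lem:delta} for the case $k=2$. Theorem~\ref{thm:highproba} gives, for an admissible parameter $\delta$, the bound $\Pr[\Delta < 2\delta\mu_c] > 1 - 2ke^{-\frac{\delta^2}{3}\mu_c}$, so the first step is to substitute $\delta = \frac{\mu_m-1}{2\mu_c}$ and observe that then $2\delta\mu_c = \mu_m - 1$. Since $\Delta = \max_{\forall 0<i,j\leq k}(|X_i - X_j|)$ is a nonnegative integer-valued random variable, the event $\{\Delta < \mu_m - 1\}$ is contained in $\{\Delta < \mu_m\}$, hence $\Pr[\Delta < \mu_m] \geq \Pr[\Delta < 2\delta\mu_c]$.

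Second, I would carry out the single algebraic simplification of the exponent: with $\delta = \frac{\mu_m-1}{2\mu_c}$ one has $\frac{\delta^2}{3}\mu_c = \frac{(\mu_m-1)^2}{4\mu_c^2}\cdot\frac{\mu_c}{3} = \frac{(\mu_m-1)^2}{12\mu_c}$. Combining this with the event inclusion from the previous step yields $\Pr[\Delta < \mu_m] > 1 - 2ke^{-\frac{(\mu_m-1)^2}{12\mu_c}}$, which is precisely the claimed inequality.

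The only point needing care --- and the one I would verify before declaring the proof complete --- is that the value $\delta = \frac{\mu_m-1}{2\mu_c}$ lies in the admissible range $0 < \delta < 1$ required by the Chernoff bounds of~\cite{MR95} invoked inside Theorem~\ref{thm:highproba}. Positivity is immediate as soon as $\mu_m > 1$, i.e.\ the attacker expects at least one block during the delay $\tau$ (cf.\ Eq.~\ref{eq:meanm}), which is the only interesting regime; the upper bound $\delta < 1$ amounts to $\mu_m < 2\mu_c + 1$, which I would either add as an explicit hypothesis of the theorem or argue is automatically satisfied in the parameter regime where the attack is meaningful (when the attacker's expected count $\mu_m$ is comparable to the per-subgraph count $\mu_c$ rather than vastly larger). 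Apart from this bookkeeping there is no new probabilistic content: the argument is a monotonicity step for an event inclusion plus one substitution, so I do not anticipate any genuine obstacle beyond stating the admissibility condition on $\delta$ cleanly.
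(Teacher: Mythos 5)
Your proposal is correct and follows essentially the same route as the paper, whose proof is simply the substitution $\delta = \frac{\mu_m-1}{2\mu_c}$ into Eq.~\eqref{eq:result} (your event-inclusion step $\{\Delta < \mu_m-1\}\subseteq\{\Delta<\mu_m\}$ and the exponent simplification are exactly what that substitution entails). Your additional check that $\delta$ must lie in $(0,1)$ for the Chernoff bounds to apply is a legitimate bookkeeping point that the paper leaves implicit, but it does not change the argument.
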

\begin{proof}
The proof follows from replacing $\delta$ by $\frac{\mu_m -1}{2\mu_c}$ in Eq.~\ref{eq:result}.
\end{proof}

\end{document}